\documentclass{article}

\usepackage{microtype}
\usepackage{graphicx}
\usepackage{subcaption}
\usepackage{booktabs} % for professional tables

\usepackage{hyperref}
\usepackage[round]{natbib}
\usepackage[accepted]{icml2026}

\usepackage[utf8]{inputenc}
\usepackage[T1]{fontenc}
\usepackage{url}
\usepackage{comment}
\usepackage{amsfonts}
\usepackage{amsmath}
\usepackage{amssymb}
\usepackage{nicefrac}
\usepackage{tikz}
\usetikzlibrary{arrows.meta, decorations.markings,
                   decorations.pathmorphing, positioning, calc}
\usepackage{xcolor}
\usepackage{bm}
\usepackage{mathtools}
\usepackage{amsthm}
\usepackage{multirow}
\usepackage{makecell}

\usepackage[capitalize,noabbrev]{cleveref}

\theoremstyle{plain}
\newtheorem{theorem}{Theorem}[section]

\newtheorem{lemma}[theorem]{Lemma}
\newtheorem{corollary}[theorem]{Corollary}
\theoremstyle{definition}

\theoremstyle{remark}

\usepackage[textsize=tiny]{todonotes}

\newcommand{\klw}{{\mathrm{KL}_{\text{W}}}}
\newcommand{\dvr}{{R_{\mathrm{DV}}}}
\newcommand{\KL}{D_{\mathrm{KL}}}
\newcommand{\E}{\mathbb{E}}
\newcommand{\R}{\mathbb{R}}
\newcommand{\ket}[1]{|#1\rangle}
\newcommand{\bra}[1]{\langle #1|}
\newcommand{\braket}[2]{\langle #1|#2\rangle}
\newcommand{\expect}[1]{\langle #1 \rangle}
\newcommand{\dW}{\mathrm{d}W}
\newcommand{\dt}{\mathrm{d}t}

\icmltitlerunning{QMaxCal: Path-Space Regularization for Open Quantum Control via Girsanov's Theorem}

\begin{document}

\twocolumn[
  \icmltitle{QMaxCal: Path-Space Regularization for Open Quantum Control \\
             via Girsanov's Theorem}

  % It is OKAY to include author information, even for blind submissions: the
  % style file will automatically remove it for you unless you've provided
  % the [accepted] option to the icml2026 package.

 \icmlsetsymbol{equal}{*}

  \begin{icmlauthorlist}
    \icmlauthor{Merijn Moody}{equal,DIEP,UVAPhys,Kort}
    \icmlauthor{Zier Mensch}{equal,UVAPhys,NTU}
    \icmlauthor{Miranda Cheng}{UVAPhys,Kort,Math}
    \icmlauthor{Peter G. Bolhuis}{UVAHof}
    \icmlauthor{Max Welling}{AMLab}
  \end{icmlauthorlist}
 
  \icmlaffiliation{UVAPhys}{Institute of Physics, University of Amsterdam, Netherlands}
  \icmlaffiliation{UVAHof}{Van 't Hoff Institute for Molecular Sciences, University of Amsterdam, Netherlands}
  \icmlaffiliation{DIEP}{Dutch Institute for Emergent Phenomena, University of Amsterdam, Netherlands}
  \icmlaffiliation{Math}{Institute for Mathematics, Academia Sinica, Taiwan}
  \icmlaffiliation{Kort}{Korteweg-de Vries Institute for Mathematics, University of Amsterdam, Netherlands}
  \icmlaffiliation{AMLab}{Amsterdam Machine Learning Lab, University of Amsterdam, Netherlands}
  \icmlaffiliation{NTU}{Department of Physics, National Taiwan University, Taiwan}

  \icmlcorrespondingauthor{Merijn Moody}{merijnmoody@gmail.com}

  % You may provide any keywords that you find helpful for describing your
  % paper; these are used to populate the "keywords" metadata in the PDF but
  % will not be shown in the document
  \icmlkeywords{Quantum Control, Open Quantum Systems, Machine Learning, ICML, AI4Physics}

  \vskip 0.3in
]

\printAffiliationsAndNotice{\icmlEqualContribution}

% This command actually creates the footnote in the first column listing the
% affiliations and the copyright notice. The command takes one argument, which
% is text to display at the start of the footnote. The \icmlEqualContribution
% command is standard text for equal contribution. Remove it (just {}) if you
% do not need this facility.
%\printAffiliationsAndNotice{}  % no special notice (required even if empty)

\begin{abstract}
Reliable quantum control in the presence of decoherence requires
policies that combat the effect of environmental noise on the
controlled dynamics. Open quantum systems under continuous monitoring
generate classical measurement records whose drift depends on the
noise experienced by the system; the records of two evolutions
sharing the same decoherence channels differ only in this drift, so
Girsanov's theorem yields a closed-form, differentiable estimator of
the KL divergence between their trajectory distributions. We
instantiate this estimator with two physically motivated reference
measures, yielding two regularizers that both drive the system
toward states where the effects of decoherence are minimal: the
\emph{Wiener KL} $(\mathrm{KL}_{\text{W}})$, which is empirically more effective under
certain conditions on the noise model, and the \emph{drift-variance
regularizer} $(R_{\mathrm{DV}})$, which works for all noise models. Both are
qualitatively distinct from existing penalties on control fluence or
smoothness: they penalize the observable consequences of control on
the decoherence channels rather than the control amplitude itself.
The regularizers outperform unregularized gradient-based and
reinforcement-learning baselines across a range of open quantum
systems---including single- and multi-qubit benchmarks and a multi-qubit chain
calibrated to a published snapshot of the IBM Kingston
processor---along several axes of evaluation: final-state fidelity,
robustness to mismatch in the assumed noise model (gains grow
from $+17$\,pp at training noise to $+27$\,pp under $2.5\times$
noise mismatch), and occupation of forbidden states. The regularizers reduce infidelity
by up to 50\%, with $\sim$16\% gains on the calibrated IBM Kingston
chain. 
% Our
% code can be found at
% \url{https://anonymous.4open.science/r/QMaxCal-6E71/}.
\end{abstract}

%=====================================================
\section{Introduction}
\label{sec:intro}
%=====================================================

Quantum optimal control provides control policies that underlie a wide range of tasks in quantum technology, from implementing high-fidelity gates in superconducting and trapped-ion processors \citep{werninghaus2021,motzoi2009} to preparing non-classical states for sensing and metrology \citep{pezze2018}, designing pulses for variational quantum algorithms \citep{magann2021}, and realizing fast qubit reset and readout \citep{gautier2025}. Because hardware parameters, available controls, and cost landscapes vary substantially across platforms, pulse design is typically cast as a numerical optimization problem \citep{koch2022,glaser2015} and solved by methods such as GRAPE \citep{khaneja2005}, Krotov's method \citep{reich2012}, adjoint-state techniques \citep{gautier2025}, or reinforcement learning \citep{ernst2025}. Improvements in the quality of these solutions translate directly into deeper achievable circuits and better-performing quantum devices.

In any realistic setting, a central challenge facing quantum optimal
control is decoherence: the irreversible coupling of the system to its
environment, which degrades coherence on timescales comparable to the
pulse duration and ultimately limits the fidelity of any protocol \citep{krantz2019, PhysRevLett.129.150504}. The evolution of a system subject to environmental noise can be modeled by a stochastic Schr\"odinger equation, a stochastic differential
equation whose noise terms capture the system's coupling to each
decoherence channel. Different control protocols
can route the system through regions of state space that differ
substantially in their exposure to decoherence \citep{lidar1998,viola1999},
so the trajectory through Hilbert space, not just the target state, is
an object of optimization.

We propose QMaxCal, a framework for path-space regularization that
acts directly on the trajectory distribution of an open quantum
system. The SSE generates, for each decoherence channel, a classical
measurement record whose drift encodes the noise experienced by the
system; the records of two evolutions sharing the same decoherence
channels are It\^o diffusions with shared noise structure and
control-dependent drifts, so Girsanov's theorem
\citep{girsanov1960} yields a closed-form, differentiable estimator
of the KL divergence between their path distributions. The framework
can be read as a quantum-control instance of the Maximum Caliber
principle \citep{jaynes1980minimum, jaynes1985macroscopic, presse2013};
what is new is the choice of reference measure (Wiener noise or its
closest constant-drift extension) and the use of Girsanov's theorem
to obtain the resulting KL in closed form. The two resulting
regularizers contrast with standard penalties on pulse \emph{fluence}
\citep{khaneja2005,reich2012,villanueva2024} and evolution
\emph{smoothness} \citep{ernst2025}: they penalize the observable
consequences of control on the decoherence channels rather than the
control amplitude or its derivatives.
We make the
following contributions:
\begin{enumerate}
    \item We derive a \textbf{closed-form estimator for the KL
    divergence over quantum trajectory space} via Girsanov's theorem
    (Eq.~\ref{eq:kl_quantum}), valid for any pair of evolutions
    sharing the same Lindblad operators.
    \item We instantiate the general estimator with two physically
    motivated reference measures, yielding two regularizers that act
    on different structural features of the noise: the
    \textbf{Wiener KL regularizer} $(\klw)$ and the
    \textbf{drift-variance regularizer} $(\dvr)$. Both drive the system
    toward states where the effects of decoherence are minimal; $\klw$
    is empirically more effective under certain conditions on the
    noise model, while $\dvr$ works for any noise model.
    \item We \textbf{demonstrate the framework on five open-system
    control benchmarks}, including amplitude damping, STIRAP, four-level shelving
    structures, a synthetic qubit chain with site-dependent noise,
    and a six-qubit chain calibrated to a published snapshot of the
    IBM Kingston processor, where the regularizers reduce infidelity
    by up to 50\% across multi-qubit benchmarks and by $\sim$16\% on
    the calibrated IBM Kingston chain over unregularized gradient-based
    and constrained-RL baselines.
\end{enumerate}

\begin{comment}
\begin{table}[t]
\centering
\small
\caption{Comparison of open quantum control methods.}
\label{tab:comparison}
\begin{tabular}{@{}lcccc@{}}
\toprule
 & \makecell{Arbitrary\\noise model} & \makecell{Trajectory-\\based} & \makecell{Principled\\regularizer} & \makecell{Single\\parameter} \\
\midrule
GRAPE / Adjoint & \checkmark & \texttimes & \texttimes & \texttimes \\
RL \citep{ernst2025} & \checkmark & \texttimes & \texttimes & \texttimes \\
QDC \citep{villanueva2024} & \texttimes & \checkmark & \checkmark\,(fluence) & \checkmark \\
\textbf{QMaxCal (ours)} & \checkmark & \checkmark & \checkmark\,(KL) & \checkmark \\
\bottomrule
\end{tabular}
\end{table}
\end{comment}
%=====================================================
\section{Related Work}
\label{sec:related}
%=====================================================

\paragraph{Gradient-based quantum optimal control.} GRAPE \citep{khaneja2005} and its open-system variant Open GRAPE \citep{boutin2017} compute optimal pulses via piecewise evolution and gradient ascent over density matrices. Krotov's method \citep{reich2012} ensures monotonic convergence. \citet{gautier2025} introduced a scalable adjoint-state method for large open systems. \citet{abdelhafez2019} introduced trajectory-based gradient computation using automatic differentiation through the SSE, demonstrating the quadratic scaling advantage of state vectors over density matrices and enabling GPU parallelization. Our work builds on this computational paradigm but introduces a qualitatively different objective: rather than differentiating a fidelity cost with ad hoc control penalties, we regularize with a path-space KL divergence minimizing the effects of decoherence.

\paragraph{Reinforcement learning.} \citet{ernst2025} cast pulse design as a constrained RL problem and outperform previous gradient based methods. Their approach introduces penalties on the fluence and smoothness of the control pulse, as well as on the smoothness of the resulting quantum evolution, with the goal of producing experimentally realizable signals and improving computational efficiency. QMaxCal complements this line of work by introducing a regularizer that acts directly on the trajectory through Hilbert space, penalizing the controlled system's coupling to the environment rather than the shape of the drive waveform; the two types of penalizers are not interchangeable, as we verify by direct ablation in Appendix~\ref{app:smoothness_ablation}.

\paragraph{Maximum Caliber and path integral control.} The principle of
Maximum Caliber (MaxCal) \citep{jaynes1980minimum, jaynes1985macroscopic,presse2013} extends Maximum
Entropy to dynamical settings, selecting a path distribution that
minimizes a KL divergence to a reference subject to imposed
constraints. MaxCal has found applications in classical
nonequilibrium statistical mechanics \citep{ghosh2020} and molecular
dynamics \citep{dixit2018,bolhuis2021,brotzakis2021,bolhuis2023}.
Applied to control with the \emph{uncontrolled} evolution as
reference, the MaxCal paradigm reduces to path-integral control
\citep{kappen2005,kappen2016adaptive}, which \citet{villanueva2024}
adapted to open quantum systems as the Quantum Diffusion Control
(QDC) algorithm and \citet{najafi2025} recently extended to quantum
chemistry. Under additional structural conditions on the noise model,
the resulting KL reduces to a fluence penalty on the controls.
QMaxCal complements this approach by taking the Wiener measure as
reference, yielding a KL that penalizes decoherence exposure rather
than deviation from free dynamics, and by applying to arbitrary
Lindblad systems without QDC's structural conditions on the noise
model (Section~\ref{sec:wiener}, Appendix~\ref{app:reduction}).

%=====================================================
\section{Background}
\label{sec:background}
%=====================================================

\subsection{Open quantum systems and quantum trajectories}
Open quantum systems are described by \emph{density matrices}
$\rho \in \mathbb{C}^{N \times N}$. When the system interacts with a
Markovian (memoryless) environment, $\rho$ evolves according to the
Lindblad master equation~\citep{lindblad1976} (see
Appendix~\ref{app:quantum_primer} for a self-contained introduction):
\begin{equation}
\label{eq:lindblad}
\dot{\rho} = -i[H^{(\theta)}(t),\rho] + \sum_{k=1}^K \mathcal{D}_k(\rho),
\end{equation}
where $H^{(\theta)}(t) = H_0 + \sum_a u_a^{(\theta)}(t) H_a$ is the
system Hamiltonian with time-dependent control fields
$u_a^{(\theta)}(t)$ depending on parameters $\theta$,
$\{L_k\}_{k=1}^K$ are Lindblad operators that model the irreversible
interactions with the environment (e.g., energy decay, phase
randomization), and $\mathcal{D}_k(\rho) = L_k \rho L_k^\dagger -
\tfrac{1}{2}\{L_k^\dagger L_k, \rho\}$ is the dissipator associated
with channel $k$.

The master equation describes the ensemble-averaged evolution, but it
can be \emph{unravelled} into quantum trajectories describing the
evolution of individual \emph{pure states}
$\ket{\psi(t)} \in \mathbb{C}^N = \mathcal{H}$ under a given
interaction with the environment. A given Lindblad evolution admits
multiple unravellings; throughout this work we use the diffusive
(homodyne) unravelling. Writing
$\alpha_k(t) := \bra{\psi(t)}(L_k+L_k^\dagger)\ket{\psi(t)}$, the
conditional state obeys the stochastic Schr\"odinger
equation (SSE)~\citep{wiseman2009,percival1998}
\begin{equation}
\label{eq:sse}
\begin{split}
d|\psi\rangle 
   =\,& \underbrace{\Big[-iH^{(\theta)} + \sum_k\!\big(\tfrac{1}{2}\alpha_k L_k
       - \tfrac{1}{2}L_k^\dagger L_k
       - \tfrac{1}{8}\alpha_k^2\big)\Big]|\psi\rangle\,dt}_{\text{deterministic drift}} \\
   & + \underbrace{\sum_k\!\big(L_k - \tfrac{1}{2}\alpha_k\big)|\psi\rangle\, dW_k}_{\text{stochastic diffusion}},
\end{split}
\end{equation}
where $dW_k(t)$ are independent Wiener increments satisfying
$\E[dW_k(t)] = 0$ and $\E[dW_j(t) dW_k(t)] = \delta_{jk}\,dt$. The
SSE induces a trajectory distribution $P_\theta$ on path space;
averaging over $P_\theta$ recovers the density matrix evolution:
$\rho(t) = \E_{P_\theta}\!\left[\ket{\psi(t)}\bra{\psi(t)}\right]$
satisfies the Lindblad master equation~\eqref{eq:lindblad}. The
trajectory representation also has a numerical advantage: state
vectors carry $O(\dim)$ rather than $O(\dim^2)$ entries, so SSE
integrators reach roughly two qubits beyond the largest system that
the Lindblad reference solver fits on a single GPU when the noise
levels are not too high (Appendix~\ref{app:scaling}).

In experimental platforms where the environment is itself measured
(circuit QED, trapped ions, optical cavities), each individual
trajectory $\ket{\psi(t)}$ corresponds to a specific realization of
the classical \emph{measurement record} $I_k(t)$ obtained by
monitoring decoherence channel $k$. The same Wiener increments
$\dW_k$ that drive the SSE~\eqref{eq:sse} also drive the measurement
record~\citep{wiseman2009}, with the same drift $\alpha_k$ appearing
in both:
\begin{equation}
\label{eq:record}
dI_k(t) = \alpha_k(t)\,\dt + \dW_k,
\end{equation}
and a unit diffusion coefficient that is independent of the control
protocol.

Equation~\eqref{eq:record} is the central object of this paper:
different controls produce different drifts $\alpha_k$ while sharing
the same noise structure. This is precisely the setting in which
Girsanov's theorem yields a tractable Radon--Nikodym derivative
between path distributions, as we now make precise.

\subsection{Girsanov's theorem}
We state Girsanov's theorem informally here; a formal derivation
with all conditions verified for the quantum setting is given in
Appendix~\ref{app:girsanov_formal}. Consider two It\^o diffusion
processes sharing the same noise but with different drift processes:
\begin{equation}
dX_t^{(i)} = a^{(i)}_t\,\dt + \dW_t, \qquad i = 1,2,
\end{equation}
where each drift $a^{(i)}_t$ may depend on the entire trajectory
history up to time $t$. Writing $\Delta a_t := a^{(1)}_t - a^{(2)}_t$
for the drift difference, Girsanov's theorem~\citep{girsanov1960,
oksendal2003} states that the Radon--Nikodym derivative between the
induced path measures $P^{(1)}, P^{(2)}$ is
\begin{equation}
\label{eq:girsanov_classical}
\log\frac{dP^{(1)}}{dP^{(2)}} = \int_0^T \Delta a_t\,dB^{(2)}_t
- \frac{1}{2}\int_0^T (\Delta a_t)^2\,\dt,
\end{equation}
and the KL divergence takes the closed form
\begin{equation}
\label{eq:kl_general}
\KL[P^{(1)} \| P^{(2)}] = \frac{1}{2}\,\E_{P^{(1)}}\!\left[
\int_0^T (\Delta a_t)^2\,\dt\right].
\end{equation}

\subsection{Decoherence-free subspaces}
\label{sec:dfs}
For
structured noise models there may exist subspaces on which the Lindblad operators $\{L_k\}_{k=1}^K$ act trivially. A subspace $\mathcal{S} \subseteq \mathcal{H}$ is a
\emph{decoherence-free subspace} (DFS) \citep{lidar1998} if each
Lindblad operator acts as a scalar on it: $L_k\ket{\phi} =
\ell_k\ket{\phi}$ for all $\ket{\phi} \in \mathcal{S}$ for all $k$, with constants
$\ell_k \in \mathbb{C}$ depending only on the channel. On such a
subspace the dissipator vanishes, and the stochastic backaction
$(L_k - \tfrac{1}{2}\alpha_k)\ket{\psi}\,\dW_k$ in the SSE~\eqref{eq:sse}
reduces to a global stochastic phase $i\,\mathrm{Im}(\ell_k)\ket{\psi}\,\dW_k$
acting identically on every state in $\mathcal{S}$, and so is physically
unobservable\footnote{In quantum mechanics, two state vectors that
differ only by a global phase $\ket{\psi}$ and $e^{i\theta}\ket{\psi}$
represent the same physical state: they yield identical probabilities
for every measurement, the same density matrix
$\rho = \ket{\psi}\bra{\psi}$, and the same expectation values for all
observables. A stochastic global phase therefore has no observable
consequence on the trajectory.}; it is identically zero when
$\ell_k \in \mathbb{R}$, as in all benchmarks of this paper. A trajectory that remains in $\mathcal{S}$ therefore undergoes unitary
decoherence-free evolution, with drift
\begin{equation}
\label{eq:dfs_drift}
\alpha_k(t) = \bra{\psi(t)}(L_k + L_k^\dagger)\ket{\psi(t)} = 2\,\mathrm{Re}(\ell_k),
\end{equation}
constant in time and across realizations. Two cases arise: 1) when
$\ell_k = 0$ for all $k$, the subspace lies in the joint kernel\footnote{The kernel of an operator $L$ is the subspace
$\ker(L) = \{\ket{\psi} : L\ket{\psi} = 0\}$; the joint kernel
$\bigcap_k \ker(L_k)$ consists of states annihilated by all
Lindblad operators simultaneously.},
$\bigcap_k \ker(L_k)$, and the drift vanishes and 2) when $\mathrm{Re}(\ell_k) \neq 0$ for
some $k$, the state is decoherence-free but has nonzero constant drift.
%=====================================================
\section{Methods}
\label{sec:methods}
%=====================================================
\begin{comment}
We consider control protocols $u_a^{(\theta)}(t)$ parameterized by a
finite-dimensional vector $\theta$---which may correspond to
piecewise-constant pulse amplitudes, neural-network weights, or
basis-function coefficients---which enter the system Hamiltonian as
$H(t) = H_0 + \sum_a u_a^{(\theta)}(t) H_a$ via fixed control
Hamiltonians $\{H_a\}$. The resulting Hamiltonian drives the
SSE~\eqref{eq:sse}, inducing a trajectory distribution $P_\theta$ on
path space. Optimization proceeds by gradient descent on $\theta$
by sampling the SSE, with both fidelity and KL
terms evaluated on the same trajectory ensemble.
\end{comment}
\subsection{Girsanov KL divergence over quantum trajectory space}
Applying Girsanov's theorem (Section~\ref{sec:background},
Eq.~\eqref{eq:kl_general}) to the measurement records of two
evolutions sharing the same Lindblad operators yields the central
result of this paper. Let $P^{(1)}$ and $P^{(2)}$ denote the path
measures induced by the SSE~\eqref{eq:sse} under two control
protocols, two Hamiltonians, or a controlled-vs.-uncontrolled pair;
the only requirement is that the Lindblad operators $\{L_k\}$ are
shared, which is automatically satisfied for any pair of control
protocols on the same physical system. Under each evolution
$i \in \{1, 2\}$, the measurement record for channel $k$ has drift
$\alpha_k^{(i)}(t) = \bra{\psi^{(i)}(t)}(L_k + L_k^\dagger)
\ket{\psi^{(i)}(t)}$ and unit diffusion coefficient. Writing
$\Delta\alpha_k(t) := \alpha_k^{(1)}(t) - \alpha_k^{(2)}(t)$ for the
per-channel drift difference, Girsanov's theorem then yields:
\begin{equation}
\label{eq:kl_quantum}
\KL[P^{(1)} \| P^{(2)}] = \frac{1}{2}\sum_{k=1}^K
\E_{P^{(1)}}\!\left[\int_0^T \Delta\alpha_k(t)^2\,\dt\right],
\end{equation}
where the subscript denotes an average with respect to $P^{(1)}$.

\subsection{The Wiener KL and drift-variance regularizers}
\label{sec:wiener}
Consider the trajectory distribution $P_{\theta}$ introduced above in
Section~\ref{sec:background}. The general estimator~\eqref{eq:kl_quantum}
requires a choice of reference measure. The simplest is the
\emph{Wiener measure} $P_W$ (the law of pure Brownian motion with
zero drift). Setting $\klw := \KL[P_\theta \| P_W]$, the
estimator~\eqref{eq:kl_quantum} reduces to
\begin{equation}
\label{eq:kl_wiener}
\klw = \frac{1}{2}\sum_{k=1}^K \E_{P_\theta}\!\left[
\int_0^T \alpha_k^{(\theta)}(t)^2\,\dt\right],
\end{equation}
and requires no reference SSE simulation. A more permissive reference
is the closest constant-drift process: for $c \in \R^K$, let $P_c$
denote the law of $dY_t = c\,\dt + \dW_t$. We define the
\emph{drift-variance regularizer} as
$\dvr := \min_{c \in \R^K} \KL[P_\theta \| P_c]$. The KL is
quadratic in $c$, and minimization yields
\begin{equation}
\label{eq:kl_dv}
\dvr = \frac{1}{2}\sum_{k=1}^K \E_{P_\theta}\!\left[
\int_0^T \big(\alpha_k^{(\theta)}(t) - \bar\alpha_k\big)^2\,\dt
\right],
\end{equation}
with optimum at $c_k^\star = \bar\alpha_k :=
T^{-1}\E_{P_\theta}[\int_0^T \alpha_k^{(\theta)}(t)\,\dt]$, the
time-and-ensemble drift mean
(Appendix~\ref{app:drift_variance}).

\paragraph{Relating the two regularizers to the DFS.} Both regularizers vanish on certain DFS evolutions, but they
identify different subsets. The Wiener KL~\eqref{eq:kl_wiener} vanishes on states with
$\alpha_k = \expect{L_k+L_k^\dagger} = 0$, which includes the kernel
component $\bigcap_k\ker(L_k)$ ($\ell_k = 0$) but not DFS evolutions
with $\mathrm{Re}(\ell_k)\neq 0$. The condition $\alpha_k = 0$ can also hold on
states outside $\ker(L_k)$ that incur full stochastic backaction
$(L_k - \tfrac{1}{2}\alpha_k)\ket{\psi}\,\dW_k$ — Wiener KL does not penalize
these.

The drift-variance regularizer~\eqref{eq:kl_dv} is the natural
extension to arbitrary $\ell_k$: it vanishes on any DFS, since
\eqref{eq:dfs_drift} gives $\bar\alpha_k = \alpha_k(t) = 2\,
\mathrm{Re}(\ell_k)$ identically. The converse also holds: a
trajectory along which $\alpha_k(t)$ is constant in time and across
noise realizations must lie in a DFS, since any deviation from a
simultaneous eigenstate of the $\{L_k\}$ produces nonzero stochastic
backaction, which generically yields different drifts across
realizations. Constant drift therefore characterizes DFS evolutions
exactly, and $\dvr$ identifies them all. The two regularizers are
complementary: $\klw$ has the stronger inductive bias when
$\bigcap_k \ker(L_k)$ is non-empty and reachable, while $\dvr$ is the appropriate
choice for systems whose protected manifold has nonzero constant
drift or where no joint kernel exists.

\paragraph{Relationship to QDC's fluence penalty.} The fluence
penalty of \citet{villanueva2024} arises from a different
instantiation of \eqref{eq:kl_quantum}: with the \emph{uncontrolled}
evolution as reference and an additional structural condition on the
noise model, the resulting KL reduces exactly to a noise-weighted
control fluence (Appendix~\ref{app:reduction}). The structural
condition is restrictive — none of our benchmarks satisfy it — and
\citet{villanueva2024} consequently use QDC primarily as an annealing
framework for closed-system control rather than for combating
decoherence.

\subsection{The QMaxCal objective}
We instantiate the framework on \emph{state transfer} tasks:
starting from an initial state $\ket{\psi(0)}$, the goal is to drive
the system to a target state $\ket{\phi_{\mathrm{target}}}$ at a
fixed final time $T$. The quantity of interest is the final-state
fidelity $F(\psi(T)) = |\braket{\phi_{\mathrm{target}}}{\psi(T)}|^2$,
and combining either regularizer of Section~\ref{sec:wiener} with
this fidelity objective yields the QMaxCal loss
\begin{equation}
\label{eq:objective}
\begin{split}
\mathcal{L}(\theta) =\;& 1 - \E_{P_\theta}[F(\psi(T))] \\
&{}+ \lambda_{\klw}\,\klw + \lambda_{\dvr}\,\dvr
+ \lambda_{\mathrm{flu}}\,\mathrm{Flu},
\end{split}
\end{equation}
where $\mathrm{Flu} = \sum_a \int_0^T |u_a^{(\theta)}(t)|^2\,\dt$
is a standard control-fluence penalty~\citep{khaneja2005,reich2012,
villanueva2024}, included to bound the control amplitude and produce
realistic control policies. The hyperparameters $\lambda_\klw,
\lambda_\dvr \geq 0$ control the trade-off between path-space
regularization and fidelity; setting both to zero recovers
unconstrained gradient-based optimal control~\citep{abdelhafez2019}.
The parameters $\theta$ may correspond to piecewise-constant pulse
amplitudes, neural-network weights, or basis-function coefficients;
optimization proceeds by gradient descent on $\theta$ with all
terms evaluated on the same sampled SSE trajectory ensemble. We
integrate the SSE with one of two diffusive schemes (Euler--Maruyama
or an exponential split-step integrator); Appendix~\ref{app:scaling}
describes both schemes, their trade-offs, and the per-experiment
choice.
%=====================================================
\section{Experiments}
\label{sec:experiments}
%=====================================================

We evaluate QMaxCal on five open-system control benchmarks across three structural families,
comparing against an unregularized gradient baseline (the GRAPE-style
trajectory optimizer of \citet{abdelhafez2019}) and the PPO method of
\citet{ernst2025}, with PPO hyperparameters tuned per-benchmark via a
two-phase sweep (Appendix~\ref{app:ppo}). Fidelity differences are
reported in percentage points (pp), $F_A - F_B$; relative infidelity
reductions as percentages of $1 - F_B$. Figure~\ref{fig:system_schematics} (Appendix~\ref{app:quantum_primer}) shows the level structure of all five benchmarks for reference.
\subsection{Single-qubit amplitude damping}
\label{sec:exp_ampdamp}
Amplitude damping models a single qubit losing energy to its
environment — the dominant noise process in superconducting,
trapped-ion, and atomic platforms, where it sets the $T_1$
relaxation time \citep{krantz2019,kubica2023}. Population flows
irreversibly from the excited state $\ket{1}$ to the ground state
$\ket{0}$ at rate $\gamma$, via the Lindblad operator
$L = \sqrt{\gamma}\,\sigma_-$ where $\sigma_- = \ket{0}\bra{1}$
is the qubit lowering operator. Since
$\sigma_-\ket{0} = 0$, the kernel $\ker L = \mathrm{span}\{\ket{0}\}$
is one-dimensional and consists of the ground state alone
(Figure~\ref{fig:system_schematics}a); the Wiener KL pulls the
trajectory toward it. Regularized policies mirror the strategies of
active qubit reset and just-in-time state preparation used in real
hardware \citep{magnard2018,egger2018}, where ancillas are
deliberately held at $\ket{0}$ until needed to limit $T_1$
exposure.

We consider the task $\ket{+} \to \ket{Y}$ — preparing the
superposition $\ket{Y} = (\ket{0} + i\ket{1})/\sqrt{2}$ from
$\ket{+} = (\ket{0} + \ket{1})/\sqrt{2}$ under two control fields
that drive rotations of the qubit (Pauli operators $\sigma_x$ and
$\sigma_y$, defined in Appendix~\ref{app:quantum_primer}). Both
endpoints sit on the Bloch-sphere equator with equal $\ket{0}$ and
$\ket{1}$ populations, so a noise-naive geodesic stays on the
equator and is fully exposed to decoherence; a decoherence-aware
policy must detour through $\ket{0} \in \ker L$. The benchmark is
therefore a clean test of whether $\klw$ routes the trajectory
through the kernel. We sweep the dimensionless decoherence
strength $\gamma T \in \{0.1, 0.5, 1, 2, 5\}$ at fixed $T=1$ with
$n_{\mathrm{trajs}}=256$ stochastic trajectories per gradient step,
training each policy for 5000 steps. We compare the unregularized baseline
($\lambda_\klw=\lambda_\dvr=0$), Wiener KL ($\lambda_\klw=5$),
drift-variance ($\lambda_\dvr=5$), and PPO on
fidelity (Appendix~\ref{app:ampdamp}, Table~\ref{tab:ampdamp_fidelity_full})
and SSE-trajectory population variance
(Appendix~\ref{app:ampdamp}, Table~\ref{tab:ampdamp_variance_full}), both computed using the same
exact-Lindblad solver.
\begin{figure*}[t]
    \centering
    \includegraphics[width=\textwidth]{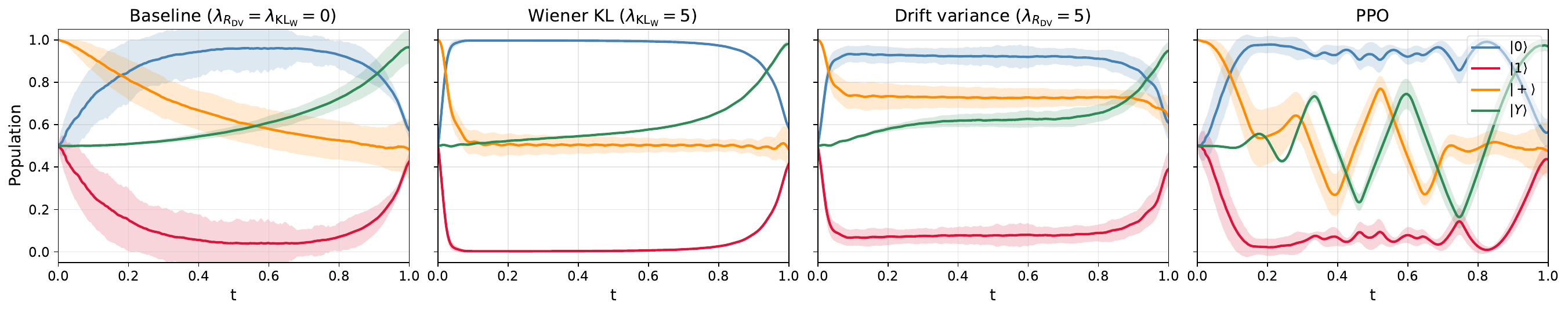}
    \caption{\textbf{SSE-trajectory population variance at
    $\gamma T = 2$.} Mean populations across $128$ SSE samples (lines)
    with $\pm 1\sigma$ bands for the baseline ($\lambda_\klw=\lambda_\dvr=0$), Wiener KL
    ($\lambda_{\klw}=5$), drift-variance ($\lambda_{\dvr}=5$), and PPO.
    Wiener KL contracts the time-integrated population variance from
    $0.0321$ (baseline) to $0.0021$, a $15\times$ reduction, consistent
    with the policy routing the state along $\ker L$ where both the
    SSE drift and noise vanish. Drift-variance reduces the variance
    relative to baseline but stabilizes outside $\ker L$.}
    \label{fig:ampdamp_var}
\end{figure*}
Figure~\ref{fig:ampdamp_var} confirms the predicted behavior. At
$\gamma T = 2$, the Wiener KL policy routes the state through
$\ket{0} \in \ker L$ where the SSE drift and noise both vanish, and
the time-integrated population variance
$\sum_k\int_0^T \mathrm{Var}[P_k(t)]\,dt$ drops $15\times$, from
$0.0321\pm0.0024$ (baseline) to $0.0021\pm0.0000$ (Wiener KL). Because
fidelity is an average over trajectories and the optimum is a delta
on $\ket{\phi_{\mathrm{target}}}$ with zero variance, this contraction
translates directly into fidelity gain. The Wiener KL gap over baseline is largest in the intermediate
window $\gamma T \in [0.5, 2]$: at lower noise decoherence barely
affects the final state, and at higher noise the bath alone drags
every trajectory onto $\ket{0}$. Drift-variance trails the baseline by $0.25$ to $4$pp fidelity:
Figure~\ref{fig:ampdamp_alpha2} shows that the Wiener KL integrand
$\mathbb{E}_\psi[\alpha(t)^2]$ remains near the baseline level,
confirming that drift-variance stabilizes $\alpha$ to a roughly
constant non-zero value rather than driving the trajectory into
$\ker L$.
PPO trails Wiener KL at every $\gamma T$, with the gap widening from
$0.1$\% at low noise to over $2$\% at $\gamma T = 5$.

\subsection{Lossy state avoidance: STIRAP and the diamond system}
\label{sec:exp_dark}

Both systems in this section share a structural pattern: the
optimizer must move population between two states that are not
directly coupled, and the available routes pass through one or more
lossy intermediate states. The Wiener KL identifies the lossy states
through their nonzero drift and pulls the trajectory toward routes
that minimize their occupation.

\paragraph{STIRAP.}
Stimulated Raman adiabatic passage (STIRAP) is widely used for
transferring population between two long-lived states across atomic,
molecular, and solid-state platforms \citep{vitanov2017,bergmann2019}.
The standard model is a three-level $\Lambda$ system with
metastable ground states $\ket{g_1},\ket{g_2}$ coupled through a
lossy excited state $\ket{e}$ that decays back to $\ket{g_1}$ at
rate $\gamma$ via $L = \sqrt{\gamma}\ket{g_1}\bra{e}$
(Figure~\ref{fig:system_schematics}b). The two ground states are not directly coupled, so any
protocol must route population through the $\ket{e}$ coupling;
at finite protocol speed some $\ket{e}$ population is
unavoidable, and a good protocol minimizes it. For this Lindblad operator the drift
$\alpha = \expect{L+L^\dagger}$ vanishes on $\ket{g_1}$ and
$\ket{g_2}$ but is nonzero on any state with $\ket{e}$ population,
so the Wiener KL directly penalizes occupation of the lossy level —
without the reward shaping that conventional methods require
\citep{abdelhafez2019, ernst2025}.

At both noise levels we tested, the baseline and Wiener KL reach
statistically indistinguishable fidelity. The regularizer's effect
shows up instead in $\ket{e}$ occupation, which governs photon
scattering and protocol viability outside the strict adiabatic
limit \citep{vitanov2017,bergmann2019}. At $\gamma T = 10$, Wiener
KL reduces peak $\ket{e}$ population by $56\%$ (from $0.097$ to
$0.043$) and time-integrated $\ket{e}$ exposure by $35\%$ (from
$0.026$ to $0.017$). At $\gamma T = 1$ both metrics are already low
under the baseline and the regularizer's effect is correspondingly
smaller (Appendix~\ref{app:stirap}, Table~\ref{tab:stirap_robustness}).

\paragraph{Diamond system.}
A complementary structural pattern arises when the computational
states themselves leak slowly but a nearby stable state is available
as a refuge. This setup is common across qubit architectures:
metastable trapped-ion qubits \citep{allcock2021,yang2022},
erasure-conversion qubits \citep{wu2022,levine2024}, and DFS gate designs \citep{ivanov2010} all route control through
stable subspaces to avoid lossy ones. We model it as a four-level
system with two \emph{lossy} computational states $\ket{b}, \ket{t}$
(the source and target of the protocol), a \emph{safe} auxiliary
state $\ket{d}$, and a \emph{dump} state $\ket{\mathrm{dump}}$ that
collects population leaked from $\ket{b}$ or $\ket{t}$
(Figure~\ref{fig:system_schematics}c). The leak
acts identically on both computational states, with rate $\gamma$:
$L_b = \sqrt{\gamma}\,\ket{\mathrm{dump}}\bra{b}$ and
$L_t = \sqrt{\gamma}\,\ket{\mathrm{dump}}\bra{t}$. Three controls
couple adjacent levels: $\Omega_{bt}$ on
$\ket{b}\!\leftrightarrow\!\ket{t}$ (the direct but lossy route),
and $\Omega_{bd}, \Omega_{dt}$ that together provide an indirect
route via $\ket{d}$. The task is $\ket{b} \to \ket{t}$.

The unregularized baseline gets trapped in a local minimum that
routes population through the lossy direct coupling, reaching only
$F = 0.665$ with $\sim 30\%$ accumulating in $\ket{\mathrm{dump}}$
(Figure~\ref{fig:dark_route_compare}, left). Wiener KL forces the
optimizer to discover the route through $\ket{d}$, improving fidelity
to $F = 0.834$ ($+17$\,pp, Appendix~\ref{app:diamond},
Table~\ref{tab:dr3_multiseed}; Figure~\ref{fig:dark_route_compare},
middle). PPO trails at $F = 0.605$.

Wiener KL policies are also robust to noise-model mismatch in a
striking way: a policy trained at $\gamma_{\mathrm{train}} = 2$ and
evaluated at $\gamma_{\mathrm{test}} = 5$ — a $2.5\times$ stronger
noise level than seen at training — reaches $F = 0.665$, matching
the \emph{baseline's training-noise fidelity} ($+27$\,pp over the
baseline at the same test noise). The advantage grows monotonically
with $\gamma_{\mathrm{test}}$: from $+10$\,pp at the easiest setting
to $+27$\,pp at the hardest, with the baseline and PPO collapsing
toward $F\!\sim\!0.3$ while Wiener KL stays above $0.65$
(Figure~\ref{fig:dark_route_compare}, right; full statistics in
Appendix~\ref{app:diamond}, Table~\ref{tab:dr3_robustness}).

\begin{figure*}[t]
    \centering
    \includegraphics[width=\textwidth]{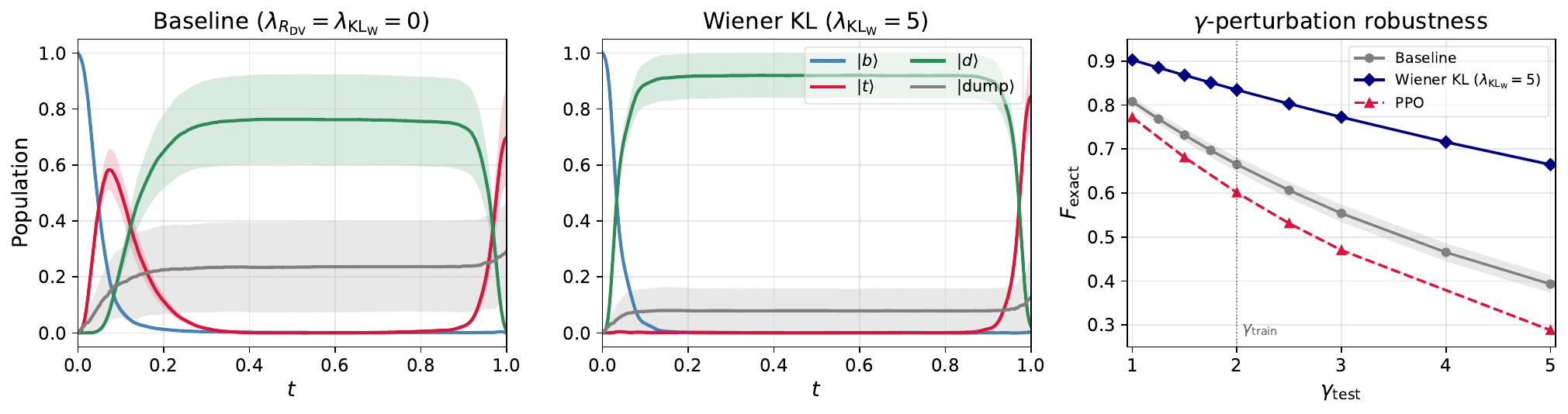}
    \caption{\textbf{Diamond system, $\gamma_{\mathrm{train}}=2$, $T=1$.}
    \emph{Left, middle:} SSE-trajectory populations (mean across 128
    samples, $\pm 1\sigma$) for baseline ($F=0.665$) and Wiener KL
    ($\lambda_{\klw}=5$, $F=0.834$); the regularizer routes population
    through $\ket{d}\in\ker L$ rather than the lossy direct coupling.
    \emph{Right:} fidelity vs.\ $\gamma_{\mathrm{test}}$ for policies
    trained at $\gamma=2$; full statistics in
    Appendix~\ref{app:diamond}.}
    \label{fig:dark_route_compare}
\end{figure*}

\subsection{Drift-variance on superconducting qubit chains}
\label{sec:exp_chain}
Excitation transfer along a chain of nearest-neighbor qubits is a
canonical primitive for short-range quantum communication
\citep{bose2003,lorenzo2015,lyakhov2005}. The task is to move a
single-qubit excitation from one end of an $N$-qubit chain to the
other, using $N-1$ tunable nearest-neighbor couplings as controls:
when active, control $H_a$ swaps the excitation between qubits
$a-1$ and $a$, leaving the rest of the chain untouched. Each site
is subject to two local noise channels — dephasing (which destroys
$\ket{0}$/$\ket{1}$ superpositions without changing populations, at rate $\kappa_i$) and
amplitude damping (the $T_1$ process from \S\ref{sec:exp_ampdamp},
rate $\gamma_i$) — with site-specific rates that vary substantially
across real hardware \citep{krantz2019}. An effective protocol must
therefore route the excitation around the noisier sites rather than
treating the chain uniformly. Lindblad operators and parameter
values are given in Appendices~\ref{app:chain4_full} and
\ref{app:ibm_chain}.

\paragraph{Asymmetry sweep.}
On a four-qubit chain ($\dim=16$, eight simultaneous Lindblad
channels; Figure~\ref{fig:system_schematics}d) we sweep the
edge-to-interior dephasing ratio
$\rho = \kappa_{\mathrm{edge}}/\kappa_{\mathrm{int}} \in \{1, 1.5,2, 4, 8\}$,
where $\kappa_{\mathrm{edge}} = \kappa_0 = \kappa_3$ are the dephasing
rates on the chain endpoints and $\kappa_{\mathrm{int}} = \kappa_1 =
\kappa_2$ those on the interior sites. Because of the high
noise levels in this regime, training is unstable, so we use a
conservative $\lambda_{\dvr}=0.02$. All training results are reported in Appendix~\ref{app:chain4_full}.
Asymmetry opens a gap in favor of drift-variance: for
$\rho\in \{1,1.5,2\}$ the two policies are statistically indistinguishable; at $\rho=4$ drift-variance
gains $+4.5$pp ($0.52\pm0.13$ vs.\ $0.48\pm0.02$); at $\rho=8$ the
gap widens to $+20$pp ($0.42\pm0.11$ vs.\ $0.22\pm0.04$), where
the baseline collapses. The mechanism is visible in the
populations (Figure~\ref{fig:chain4_asym_pops}): drift-variance
parks an increasing fraction of its time-integrated excitation on a
single low-noise interior site as the edges become noisier, while
the baseline follows a less optimal population distribution.

\begin{figure*}[t]
    \centering
    \includegraphics[width=\textwidth]{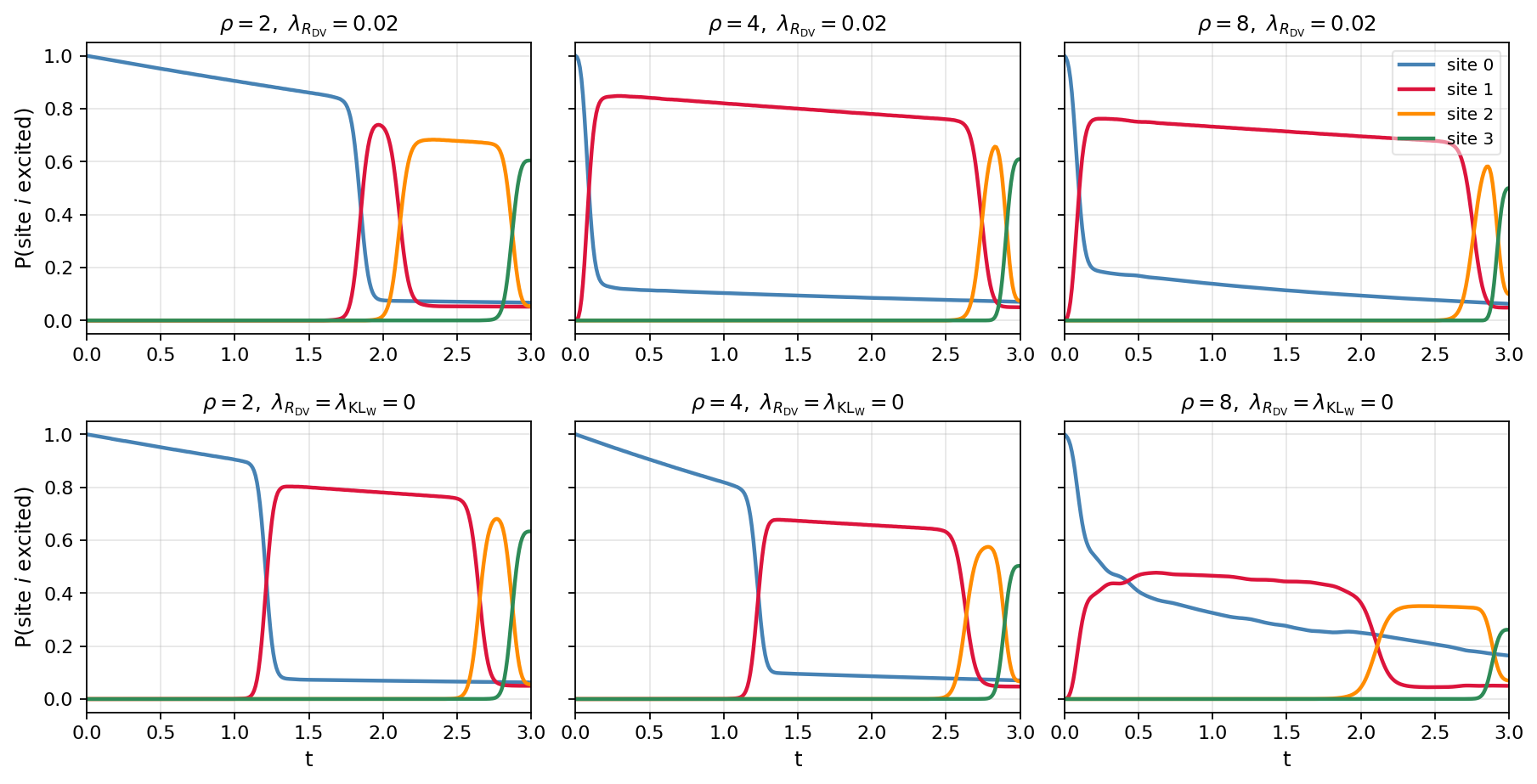}
    \caption{\textbf{Site populations across the asymmetry sweep.}
    Columns: $\rho \in \{2,4,8\}$. Rows: drift-variance
    ($\lambda_\dvr=0.02$, top), baseline (bottom). Each panel shows the single-site excitation population
    $\langle\psi(t)|(I-\sigma_z^{(i)})/2|\psi(t)\rangle$ at sites
    $i \in \{0,1,2,3\}$ (i.e., the probability that site $i$ holds
    the excitation); these four observables are the natural
    readout because the task is excitation transfer. Curves are
    drawn from a representative seed at each $\rho$
    cell (selection rule and per-seed fidelities in
    Appendix~\ref{app:chain4_full}, Table~\ref{tab:chain4_full}).}
    \label{fig:chain4_asym_pops}
\end{figure*}

\paragraph{Hardware-calibrated case study.}
We further show that drift-variance regularization improves transfer
fidelity on a six-qubit chain whose per-site $T_1, T_2$ rates are
taken from a calibration snapshot of the \texttt{ibm\_kingston}
processor retrieved on 3 May 2026 from the IBM Quantum
Platform~\citep{ibmquantum}. The chain has one site that is substantially noisier
than the others (Figure~\ref{fig:system_schematics}e), placing it in
the strong-asymmetry regime ($\rho \gg 1$) where the asymmetry sweep
predicts drift-variance to help most. Table~\ref{tab:ibm_chain}
reports fidelities; per-site excitation evolutions for all four
policies are shown in Appendix~\ref{app:ibm_chain},
Figure~\ref{fig:chain6_evolutions}.

\begin{table*}[h]
\centering
\caption{Six-qubit excitation transfer on IBM Kingston, $T=30\,\mu\mathrm{s}$. Mean fidelity (three seeds). Bold marks
the best in each row.}
\label{tab:ibm_chain}
\begin{tabular}{lcccc}
\toprule
Excitation transfer & $\lambda_{\dvr}{=}\lambda_{\klw}{=}0$ $F$ & $\lambda_{\dvr}{=}16$ $F$ & $\lambda_{\klw}=0.1$ $F$ & PPO $F$\\
\midrule
q14$\to$q9  & $0.837 \pm 0.004$ & $\mathbf{0.863 \pm 0.002}$ & $0.831\pm0.004$ & $0.604 \pm 0.041$\\
\bottomrule
\end{tabular}
\end{table*}

\section{Conclusion}
\label{sec:conclusion}

We derived a closed-form, differentiable estimator for the KL
divergence between the trajectory distributions of any two open
quantum system evolutions sharing the same decoherence channels.
Instantiated with two physically motivated reference measures, this
estimator yields the QMaxCal framework: a pair of regularizers that
act on different structural features of the noise. The Wiener KL
regularizer $\klw$ pulls the trajectory toward the joint kernel
$\bigcap_k \ker(L_k)$ and is most effective when this manifold is
reachable by control, while the drift-variance regularizer $\dvr$
identifies arbitrary decoherence-free subspaces and applies to any
noise model. Both are qualitatively distinct from existing penalties
on control fluence or smoothness: they penalize the observable
consequences of control on the decoherence channels rather than the
control amplitude itself. Empirically, the two regularizers improve
final-state fidelity by up to $+17$\,pp on single-channel systems
(rising to $+27$\,pp under a $2.5\times$ noise-model mismatch) and
$+20$\,pp on multi-channel chains over unregularized gradient-based
and constrained-RL baselines, with a $\sim$16\% infidelity reduction
on a six-qubit chain calibrated to a published snapshot of the IBM
Kingston processor, and discover decoherence-avoiding policies
without system-specific reward engineering. Because QMaxCal acts on quantum trajectories rather than density
matrices, it inherits a potential quadratic memory advantage over
density-matrix-based approaches: our SSE solvers reach $12$-qubit
($\dim=4096$) systems on a $40$\,GB A100 GPU, whereas the
density-matrix Lindblad solver used by our PPO baseline runs out of
memory at this size (Appendix~\ref{app:scaling}).

The regularizers help most when the noise model has noise
structure that the policy can exploit: a reachable joint kernel
(amplitude damping, diamond), a strongly lossy intermediate state
(STIRAP), or pronounced site-to-site asymmetry in the decoherence
rates (four qubit chain at $\rho \geq 4$, IBM Kingston). When the noise is small or if there is no DFS to exploit (four qubit chain at
$\rho \in \{1, 1.5, 2\}$) the gains are small or absent. This is consistent with the
DFS analysis of Section~\ref{sec:wiener}: the regularizers vanish
exactly on decoherence-free evolutions and if these don't exist or are not reachable the regularization is not effective.
\paragraph{Limitations.} Three points to note: (i) the Girsanov
estimator is derived for the diffusive unravelling of the SSE; jump
unravellings such as photon counting produce piecewise-deterministic
trajectories, and Eq.~\eqref{eq:kl_quantum} does not apply directly.
(ii) The Lindblad assumption excludes non-Markovian environments.
(iii) The regularizers act as a structural prior: they vanish on
decoherence-free evolutions, so when no DFS-routed protocol exists between the initial and target states under the available controls the gains are small or absent.

\paragraph{Future directions.} The diffusive Girsanov estimator can
be evaluated on quantum hardware without access to $\ket{\psi(t)}$
via \emph{contrastive density-ratio estimation}: a neural
discriminator trained to separate experimental measurement records
from synthetic reference samples gives a variational lower bound on
$\KL[P_\theta \| P_W]$ for $\klw$ or $\KL[P_\theta \| P_{\bar\alpha}]$
for $\dvr$ \citep{belghazi2018}, and updating it online
yields a feedback policy adaptive to noise drift. Extensions to
jump unravellings (via marked-point-process change of measure) and
non-Markovian settings (via system-plus-environment embeddings)
broaden the framework's applicability.

\section*{Acknowledgements}
The authors thank the anonymous reviewers for their constructive feedback. The research of MM is funded by the Dutch Institute for Emergent Phenomena (DIEP) cluster at the University of Amsterdam via the DIEP programme Foundations and Applications of Emergence (FAEME). The research of MC and ZM is supported by the Vici grant (number VI.C.232.117) from the Dutch Research Council (NWO) and the Academia Sinica Grant for Innovative Applications of AI in Humanities and Scientific Research (AS-IAIA-114-M02). 

\section*{Impact Statement}

This paper presents work whose goal is to advance the field of
Machine Learning, with applications to quantum optimal control.
There are many potential societal consequences of our work, none
which we feel must be specifically highlighted here.

\bibliography{draft/references}
\bibliographystyle{icml2026}

%%%%%%%%%%%%%%%%%%%%%%%%%%%%%%%%%%%%%%%%%%%%%%%%%%%%%%%%%%%%%
%%%%%%%%%%%%%%%%%%%%%%%%%%%%%%%%%%%%%%%%%%%%%%%%%%%%%%%%%%%%%
% APPENDIX
%%%%%%%%%%%%%%%%%%%%%%%%%%%%%%%%%%%%%%%%%%%%%%%%%%%%%%%%%%%%%
%%%%%%%%%%%%%%%%%%%%%%%%%%%%%%%%%%%%%%%%%%%%%%%%%%%%%%%%%%%%%
\newpage
\appendix
\onecolumn

\section{Open quantum systems and stochastic unravellings}
\label{app:quantum_primer}

This appendix gives the physical background for the objects used in
the main text. It is self-contained for a reader familiar with
stochastic differential equations but not with quantum mechanics. For
comprehensive treatments, see \citet{wiseman2009} for open quantum
systems and continuous measurement, \citet{percival1998} for quantum
state diffusion, \citet{manzano2022} for a review of
continuous monitoring frameworks, and \citet{koch2022} for quantum
optimal control. Figure~\ref{fig:system_schematics} shows the level
structure and noise topology of all five benchmarks at a glance, for
reference throughout the paper.

\paragraph{Closed quantum systems.} An isolated quantum system of
dimension $N$ is described by a unit vector $\ket{\psi} \in
\mathbb{C}^N$ called the \emph{state vector}, evolving under the
Schr\"odinger equation $\dot{\ket{\psi}} = -iH\ket{\psi}$ for some
Hermitian operator $H$ called the \emph{Hamiltonian}. The evolution is
unitary: norms are preserved and the dynamics are reversible.

\paragraph{Open quantum systems and the density matrix.} In practice,
every quantum system couples to an external environment
(electromagnetic field modes, phonons, nearby spins), and this
coupling introduces irreversible dynamics that cannot be described by
a state vector alone. The standard framework represents the state as
a \emph{density matrix} $\rho \in \mathbb{C}^{N \times N}$: a positive
semidefinite, trace-one, Hermitian matrix. Pure states correspond to
rank-one projectors $\rho = \ket{\psi}\bra{\psi}$; mixed states to
convex combinations $\rho = \sum_i p_i \ket{\psi_i}\bra{\psi_i}$
representing classical uncertainty over which pure state the system
is in.

We illustrate with a single qubit ($N = 2$), with computational basis
$\ket{0} = \begin{psmallmatrix} 1 \\ 0 \end{psmallmatrix}$ and
$\ket{1} = \begin{psmallmatrix} 0 \\ 1 \end{psmallmatrix}$. The
single-qubit operators we use throughout are the Pauli matrices
\begin{equation}
\sigma_x = \begin{pmatrix} 0 & 1 \\ 1 & 0 \end{pmatrix}, \quad
\sigma_y = \begin{pmatrix} 0 & -i \\ i & 0 \end{pmatrix}, \quad
\sigma_z = \begin{pmatrix} 1 & 0 \\ 0 & -1 \end{pmatrix},
\end{equation}
together with the raising and lowering operators
$\sigma_\pm = \tfrac{1}{2}(\sigma_x \mp i\sigma_y)$, so that
$\sigma_- = \ket{0}\bra{1}$ and $\sigma_+ = \ket{1}\bra{0}$. A qubit
density matrix
\begin{equation}
\rho = \begin{pmatrix} \rho_{00} & \rho_{01} \\
\rho_{10} & \rho_{11} \end{pmatrix}
\end{equation}
has diagonal entries $\rho_{00}, \rho_{11} \geq 0$ giving the
probabilities of measuring $\ket{0}$ or $\ket{1}$ (the
\emph{populations}, with $\rho_{00} + \rho_{11} = 1$), and
off-diagonal entries $\rho_{01} = \overline{\rho_{10}}$ encoding
quantum superposition (the \emph{coherences}).

\paragraph{The Lindblad master equation.} The Lindblad master
equation~\eqref{eq:lindblad} is the most general Markovian,
trace-preserving, completely positive evolution for a density
matrix~\citep{lindblad1976,gorini1976}. It consists of two parts.
The commutator $-i[H, \rho]$ generates unitary evolution under the
Hamiltonian $H = H_0 + \sum_a u_a(t) H_a$, where $H_0$ is the free
Hamiltonian and $u_a(t)$ are time-dependent \emph{control fields}
chosen by the experimenter and coupled through fixed control
Hamiltonians $H_a$. The dissipator $\sum_k \mathcal{D}_k(\rho)$ with
$\mathcal{D}_k(\rho) = L_k \rho L_k^\dagger - \frac{1}{2}\{L_k^\dagger
L_k, \rho\}$ encodes the irreversible effect of the environment, with
one \emph{Lindblad operator} $L_k$ per \emph{decoherence channel}:
$L_k$ specifies what the environment does to the system. The five
benchmarks of Section~\ref{sec:experiments} use Lindblad operators
of two basic forms — depopulating transitions like $\sigma_-$ that
move probability mass between levels, and dephasing operators like
$\sigma_z$ that scramble superpositions without changing populations.
Figure~\ref{fig:system_schematics} shows where each appears.

% =========================================================
% INSERT FIGURE HERE
% =========================================================
\begin{figure}[t]
\centering
\begin{tikzpicture}[
    level/.style={very thick, draw=black},
    control/.style={<->, thick, blue!70!black},
    decay/.style={->, thick, red!70!black, decorate, decoration={snake, amplitude=0.6mm, segment length=2mm, post length=1.5mm}},
    statelabel/.style={font=\small, anchor=west},
    panellabel/.style={font=\bfseries\small, anchor=north},
    sublabel/.style={font=\scriptsize, anchor=center, text=blue!70!black}
]

% Panel A: Amplitude damping
\begin{scope}[shift={(0,0)}]
    \draw[level] (0,0) -- (1.4,0);
    \draw[level] (0,1.6) -- (1.4,1.6);
    \node[statelabel] at (1.5,0) {$\ket{0}$};
    \node[statelabel] at (1.5,1.6) {$\ket{1}$};
    \draw[decay] (0.4,1.5) -- (0.4,0.1);
    \node[font=\scriptsize, text=red!70!black, anchor=east] at (0.35,0.8) {$\gamma$};
    \draw[control] (0.95,0.05) -- (0.95,1.55);
    \node[sublabel, anchor=west] at (1.05,0.8) {$\sigma_x, \sigma_y$};
    \node[panellabel] at (1.0,-0.5) {(a) Amplitude damping};
\end{scope}

% Panel B: STIRAP (Lambda system)
\begin{scope}[shift={(4.5,0)}]
    \draw[level] (0,0) -- (0.9,0);
    \draw[level] (1.7,0) -- (2.6,0);
    \draw[level] (0.85,1.6) -- (1.75,1.6);
    \node[font=\small, anchor=north] at (0.45,-0.05) {$\ket{g_1}$};
    \node[font=\small, anchor=north] at (2.15,-0.05) {$\ket{g_2}$};
    \node[font=\small, anchor=south] at (1.3,1.65) {$\ket{e}$};
    \draw[control] (0.7,0.05) -- (1.05,1.55);
    \draw[control] (1.55,1.55) -- (1.9,0.05);
    \draw[decay] (1.0,1.5) .. controls (0.3,1.0) and (0.0,0.6) .. (0.2,0.1);
    \node[font=\scriptsize, text=red!70!black, anchor=east] at (-0.05,0.85) {$\gamma$};
    \node[panellabel] at (1.3,-0.5) {(b) STIRAP};
\end{scope}

% Panel C: Diamond system
\begin{scope}[shift={(9,0)}]
    \draw[level] (-0.4,0.8) -- (0.4,0.8);
    \draw[level] (2.0,0.8) -- (2.8,0.8);
    \draw[level] (0.8,1.8) -- (1.6,1.8);
    \draw[level] (0.8,-0.2) -- (1.6,-0.2);
    \node[font=\small, anchor=east] at (-0.45,0.8) {$\ket{b}$};
    \node[font=\small, anchor=west] at (2.85,0.8) {$\ket{t}$};
    \node[font=\small, anchor=south] at (1.2,1.85) {$\ket{d}$};
    \node[font=\small, anchor=north] at (1.2,-0.25) {$\ket{\mathrm{dump}}$};
    \draw[control] (0.45,0.8) -- (1.95,0.8);
    \node[sublabel, anchor=south] at (1.2,0.85) {$\Omega_{bt}$};
    \draw[control] (0.4,0.95) -- (0.85,1.75);
    \draw[control] (1.55,1.75) -- (2.0,0.95);
    \node[sublabel, anchor=east] at (0.55,1.4) {$\Omega_{bd}$};
    \node[sublabel, anchor=west] at (1.85,1.4) {$\Omega_{dt}$};
    \draw[decay] (0.1,0.7) -- (0.85,-0.1);
    \draw[decay] (2.3,0.7) -- (1.55,-0.1);
    \node[font=\scriptsize, text=red!70!black, anchor=east] at (0.35,0.25) {$\gamma$};
    \node[font=\scriptsize, text=red!70!black, anchor=west] at (2.05,0.25) {$\gamma$};
    \node[panellabel] at (1.2,-0.7) {(c) Diamond};
\end{scope}

% Panel D: Chain-4
\begin{scope}[shift={(0,-3.5)}]
    \foreach \i/\x in {0/0, 1/1.2, 2/2.4, 3/3.6} {
        \draw[level, fill=white] (\x,0.6) circle (0.25);
        \node[font=\small] at (\x,0.6) {\i};
    }
    \foreach \i/\xa/\xb in {0/0/1.2, 1/1.2/2.4, 2/2.4/3.6} {
        \pgfmathsetmacro{\xmid}{(\xa+\xb)/2}
        \draw[control] (\xa+0.25,0.6) -- (\xb-0.25,0.6);
        \node[sublabel, anchor=south] at (\xmid,0.7) {$g_{\i,\pgfmathparse{int(\i+1)}\pgfmathresult}$};
    }
    \draw[decay, line width=1.2pt] (0,0.3) -- (0,-0.4);
    \draw[decay] (1.2,0.3) -- (1.2,-0.4);
    \draw[decay] (2.4,0.3) -- (2.4,-0.4);
    \draw[decay, line width=1.2pt] (3.6,0.3) -- (3.6,-0.4);
    \node[font=\scriptsize, text=red!70!black] at (0,-0.6) {$\rho\kappa_{\mathrm{int}}$};
    \node[font=\scriptsize, text=red!70!black] at (1.2,-0.6) {$\kappa_{\mathrm{int}}$};
    \node[font=\scriptsize, text=red!70!black] at (2.4,-0.6) {$\kappa_{\mathrm{int}}$};
    \node[font=\scriptsize, text=red!70!black] at (3.6,-0.6) {$\rho\kappa_{\mathrm{int}}$};
    \node[font=\scriptsize, anchor=south] at (0,1.05) {source};
    \node[font=\scriptsize, anchor=south] at (3.6,1.05) {target};
    \node[panellabel] at (1.8,-1.1) {(d) Chain-4 (asymmetric dephasing)};
\end{scope}

% Panel E: IBM Kingston chain-6
\begin{scope}[shift={(5.5,-3.5)}]
    \foreach \i/\x/\name in {0/0/q14, 1/1.0/q13, 2/2.0/q12, 3/3.0/q11, 4/4.0/q10, 5/5.0/q9} {
        \draw[level, fill=white] (\x,0.6) circle (0.22);
        \node[font=\scriptsize] at (\x,0.6) {\i};
        \node[font=\scriptsize, anchor=south] at (\x,0.85) {\name};
    }
    \foreach \xa/\xb in {0/1.0, 1.0/2.0, 2.0/3.0, 3.0/4.0, 4.0/5.0} {
        \draw[control] (\xa+0.22,0.6) -- (\xb-0.22,0.6);
    }
    \draw[decay] (0,0.35) -- (0,-0.3);
    \draw[decay, line width=0.6pt] (1.0,0.35) -- (1.0,-0.3);
    \draw[decay, line width=0.6pt] (2.0,0.35) -- (2.0,-0.3);
    \draw[decay] (3.0,0.35) -- (3.0,-0.3);
    \draw[decay, line width=2pt] (4.0,0.35) -- (4.0,-0.3);
    \draw[decay] (5.0,0.35) -- (5.0,-0.3);
    \node[font=\scriptsize, text=red!70!black, anchor=north] at (4.0,-0.35) {noisy};
    \node[font=\scriptsize, anchor=south] at (0,1.15) {source};
    \node[font=\scriptsize, anchor=south] at (5.0,1.15) {target};
    \node[panellabel] at (2.5,-1.1) {(e) IBM Kingston chain-6};
\end{scope}

% Legend
\begin{scope}[shift={(2.5,-5.4)}]
    \draw[control] (0,0) -- (0.7,0);
    \node[font=\scriptsize, anchor=west] at (0.8,0) {control coupling};
    \draw[decay] (3,0) -- (3.7,0);
    \node[font=\scriptsize, anchor=west] at (3.8,0) {decoherence channel};
    \draw[level] (6.4,0) -- (7.1,0);
    \node[font=\scriptsize, anchor=west] at (7.2,0) {energy level};
\end{scope}

\end{tikzpicture}
\caption{\textbf{Level structure and noise structure of the five
benchmarks.} Blue double arrows mark coherent control couplings;
red squiggly arrows mark decoherence channels (each acts on the
density matrix as a Lindblad operator $L_k$).
\textbf{(a)} Amplitude damping: $\ket{1}$ decays to $\ket{0}$, two
controls drive arbitrary single-qubit rotations.
\textbf{(b)} STIRAP: two stable ground states $\ket{g_1}, \ket{g_2}$
coupled through a lossy excited state $\ket{e}$ that decays back to
$\ket{g_1}$.
\textbf{(c)} Diamond: two lossy computational states $\ket{b},
\ket{t}$ each decay to $\ket{\mathrm{dump}}$; the auxiliary state
$\ket{d}$ is in $\ker L_b \cap \ker L_t$ and provides a safe indirect
route.
\textbf{(d)} Chain-4: four qubits with site-specific dephasing;
edge sites carry rates $\rho\kappa_{\mathrm{int}}$, interior sites
$\kappa_{\mathrm{int}}$ (amplitude damping not shown for clarity but
acts on every site).
\textbf{(e)} IBM Kingston chain-6: six qubits with per-site $T_1,
T_2$ from the calibration snapshot; site 4 (q10) is markedly noisier
($T_2 = 17\,\mu\mathrm{s}$, indicated by the thicker decay arrow).}
\label{fig:system_schematics}
\end{figure}

We illustrate with two channels that recur throughout the paper, both
acting on a single qubit (Figure~\ref{fig:system_schematics}a uses
amplitude damping; the chain panels d, e use both).
\begin{itemize}
\item \emph{Dephasing}, $L = \sqrt{\kappa}\,\sigma_z$, models an
environment that continuously probes whether the qubit is in $\ket{0}$
or $\ket{1}$ without exchanging energy with it. The populations
$\rho_{00}, \rho_{11}$ are unchanged, while the coherences $\rho_{01},
\rho_{10}$ decay exponentially at rate $2\kappa$. It is the dominant
noise channel in many superconducting and spin-qubit platforms and
sets the $T_2$ coherence time.

\item \emph{Amplitude damping}, $L = \sqrt{\gamma}\,\sigma_-$, models
the qubit losing a quantum of energy to its environment, for example
a photon emitted into the electromagnetic field. Population flows
irreversibly from $\ket{1}$ to $\ket{0}$ at rate $\gamma$, and the
ground state is a fixed point of the dissipation. This dominates in
optical and atomic systems and sets the $T_1$ relaxation time in
superconducting qubits.
\end{itemize}

For an $n$-qubit system, $N = 2^n$, and the density matrix has $N^2 =
4^n$ real parameters; propagating the Lindblad equation directly
therefore costs $\mathcal{O}(N^2) = \mathcal{O}(4^n)$ per time step.

\paragraph{Unravelling the master equation.} The Lindblad equation
describes the ensemble-averaged dynamics. In many experimental
platforms (circuit QED, trapped ions, optical cavities) the
environment is itself \emph{continuously monitored}, and each run of
the experiment produces a specific measurement record and a
corresponding conditional pure state $\ket{\psi(t)} \in \mathbb{C}^N$.
The decomposition of the density-matrix evolution into a distribution
over such pure-state trajectories is called an \emph{unravelling};
averaging over trajectories always recovers the density matrix,
\begin{equation}
\label{eq:rho_from_traj}
\rho(t) = \mathbb{E}\big[\ket{\psi(t)}\bra{\psi(t)}\big],
\end{equation}
regardless of which unravelling is chosen.

The choice of unravelling is determined by the type of measurement
performed on the environment. Two principal classes exist:
\begin{itemize}
\item \emph{Jump unravelling} (photon counting / direct detection).
The environment is monitored by counting detection events. Between
events, $\ket{\psi}$ evolves smoothly under a non-Hermitian effective
Hamiltonian; at each event, the state jumps as $\ket{\psi} \to
L_k\ket{\psi} / \|L_k\ket{\psi}\|$. Trajectories are piecewise
deterministic with jump times governed by Poisson processes.
\item \emph{Diffusive unravelling} (homodyne / heterodyne detection).
The output field is mixed with a strong local oscillator before
detection, producing a continuous photocurrent. The conditional state
evolves under the stochastic Schr\"odinger
equation~\eqref{eq:sse}, driven by independent Wiener increments
$\dW_k$.
\end{itemize}

\paragraph{Why diffusive unravelling.} The argument of
Section~\ref{sec:background} relies on the measurement record being an
It\^o diffusion with state-dependent drift and unit, control-independent
diffusion coefficient (Eq.~\ref{eq:record}); this is the structure
required by Girsanov's theorem. Under the jump unravelling the
measurement record consists of point processes (click times) rather
than diffusions, and the change of measure for marked point processes
takes a different form, involving products of intensity ratios at jump
times rather than It\^o integrals. Extending the framework in this
direction is discussed in Section~\ref{sec:conclusion}.

\paragraph{State vectors versus density matrices.} The trajectory
representation underlies the computational strategy adopted throughout
this paper. The conditional state vector $\ket{\psi(t)} \in
\mathbb{C}^N$ has $N$ complex entries while the density matrix
$\rho(t) \in \mathbb{C}^{N \times N}$ has $N^2$, so the state space
of a single trajectory is quadratically smaller than that of the
master equation. For $n$-qubit systems with $N = 2^n$, this is the
difference between $\mathcal{O}(2^n)$ and $\mathcal{O}(4^n)$ memory,
and the gap matters in particular for autodiff-based control: gradient
computation through the solver requires storing the entire forward
trajectory, and the smaller state space lets a single GPU handle
larger systems than direct integration of Eq.~\eqref{eq:lindblad} would
allow~\citep{abdelhafez2019}. Trajectories are also independent
across the ensemble and parallelize accordingly. We adopt this
framework, augmented with the path-space regularizer derived in the
main text.

\section{Formal derivation of the Girsanov KL on path space}
\label{app:girsanov_formal}

We give a rigorous path-space derivation of Eq.~\eqref{eq:kl_quantum}.
The argument is standard stochastic analysis~\citep{oksendal2003}; the
state-dependent drift requires the path-space framing to be handled
rigorously. We construct the relevant exponential martingale
(Lemma~\ref{lem:Z_martingale}), use it to relate $P^{(1)}$ and
$P^{(2)}$ as a Girsanov change of measure
(Theorem~\ref{thm:girsanov_app}), and read off the KL divergence
(Corollary~\ref{cor:KL_app}).

\paragraph{Setup.} Let $\Omega := C([0,\tau];\mathbb{R}^K)$ with
canonical filtration $\mathcal{F}_t$, and let $\mathbb{P}_W$ denote
$K$-dimensional Wiener measure on $\Omega$, under which the coordinate
process $W_k(t) := \omega_k(t)$ is a standard Brownian motion. Let
$\ket{\psi^{(i)}_t(\omega)}$ denote the SSE solution under protocol
$i \in \{1,2\}$, and define the drift functionals
\begin{equation}
A^{(i)}_k(\omega, t) := \bra{\psi^{(i)}_t(\omega)}(L_k + L_k^\dagger)
\ket{\psi^{(i)}_t(\omega)},
\end{equation}
which are $\mathcal{F}_t$-adapted and uniformly bounded by
$C_k := \|L_k + L_k^\dagger\|_{\mathrm{op}}$. Define $P^{(i)}$ as the
pushforward of $\mathbb{P}_W$ under the map $\omega \mapsto I^{(i)}(
\omega)$ with $I^{(i)}_k(t) := \int_0^t A^{(i)}_k(\omega, s)\,ds +
W_k(t)$. By construction, under $P^{(i)}$ the coordinate process
$\omega_k(t)$ has the law of $\int_0^t A^{(i)}_k\,ds + W_k(t)$ under
$\mathbb{P}_W$, so setting $B^{(i)}_k(t) := \omega_k(t) - \int_0^t
A^{(i)}_k(\omega, s)\,ds$ gives a standard $K$-dimensional Brownian
motion under $P^{(i)}$ and
\begin{equation}
\label{eq:Pi_sde}
d\omega_k(t) = A^{(i)}_k(\omega, t)\,\dt + dB^{(i)}_k.
\end{equation}
Set $\Delta_k := A^{(1)}_k - A^{(2)}_k$, so $|\Delta_k| \leq 2 C_k$.

\paragraph{The exponential martingale.} With $B^{(2)}$ the standard
Brownian motion under $P^{(2)}$ from~\eqref{eq:Pi_sde}, define
\begin{equation}
\label{eq:Z_def}
Z_t := \exp\!\left(\sum_k\int_0^t \Delta_k\,dB^{(2)}_k
- \frac{1}{2}\sum_k\int_0^t \Delta_k^2\,ds\right).
\end{equation}

\begin{lemma}\label{lem:Z_martingale}
$Z_t$ is a strictly positive uniformly integrable
$(\mathcal{F}_t,P^{(2)})$-martingale with $Z_0 = 1$ and
$\E_{P^{(2)}}[Z_\tau] = 1$.
\end{lemma}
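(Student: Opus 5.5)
The plan is to use the uniform boundedness of the drift difference, $|\Delta_k|\le 2C_k$, and apply Novikov's criterion under $P^{(2)}$. The key point is that Novikov is a statement about a continuous local martingale driven by a Brownian motion. Under $P^{(2)}$, the process $B^{(2)}$ from \eqref{eq:Pi_sde} is a standard $K$-dimensional Brownian motion with respect to $\mathcal{F}_t$. The integrands $\Delta_k(\omega,t)=A^{(1)}_k-A^{(2)}_k$ are $\mathcal{F}_t$-adapted (progressively measurable, since the SSE solutions have continuous paths) and bounded. Hence
\[
M_t:=\sum_k\int_0^t \Delta_k\,dB^{(2)}_k
\]
is a well-defined continuous $L^2$ martingale with quadratic variation $\langle M\rangle_t=\sum_k\int_0^t\Delta_k^2\,ds\le 4\sum_k C_k^2\,t$.

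\textbf{Step 1.} Itô's formula applied to $Z_t=\exp(M_t-\tfrac12\langle M\rangle_t)$ gives $dZ_t=Z_t\,dM_t$. So $Z$ is a continuous local martingale, strictly positive as an exponential, with $Z_0=1$.

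\textbf{Step 2.} Novikov's condition holds trivially:
\[
\E_{P^{(2)}}\!\left[\exp\!\big(\tfrac12\langle M\rangle_\tau\big)\right]\le \exp\!\Big(2\tau\sum_k C_k^2\Big)<\infty .
\]
Therefore $Z$ is a true martingale on $[0,\tau]$ and $\E_{P^{(2)}}[Z_\tau]=Z_0=1$.

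\textbf{Step 3.} Uniform integrability on the compact interval $[0,\tau]$ follows because $Z_t=\E_{P^{(2)}}[Z_\tau\mid\mathcal{F}_t]$ with $Z_\tau\in L^1$, and conditional expectations of a fixed integrable variable form a UI family. Alternatively, one can give an explicit $L^2$ bound:
\[
Z_t^2=\mathcal{E}(2M)_t\,e^{\langle M\rangle_t}\le \mathcal{E}(2M)_t\, e^{4\tau\sum_k C_k^2}.
\]
Here $\mathcal{E}(2M)$ is itself a martingale by the same Novikov argument, so $\sup_t\E[Z_t^2]<\infty$, which again gives UI.

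\textbf{Main obstacle.} The only delicate point is measure-theoretic: $\Delta_k$ must be regarded as a functional on the canonical space $\Omega$ that is adapted to the filtration under which $B^{(2)}$ is a $P^{(2)}$-Brownian motion. Since $P^{(2)}$ is defined as a pushforward, the SSE solution under protocol $1$ must be evaluated on the observed path $\omega$. I would handle this by defining $A^{(1)}_k(\omega,t)$ through strong existence and uniqueness for the SSE, driven by the innovation $\omega-\int A^{(1)}$. The SSE coefficients are locally Lipschitz and norm-preserving, hence effectively globally Lipschitz on the unit sphere. This yields a measurable non-anticipating functional of $\omega$. Once adaptedness is in hand, boundedness by $C_k$ is immediate from $|\langle\psi|X|\psi\rangle|\le\|X\|_{\mathrm{op}}$ for unit $\psi$, and the rest is the standard Novikov argument above.
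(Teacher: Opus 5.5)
Your proposal is correct and follows the paper's proof: It\^o's formula gives $dZ_t = Z_t\,dM_t$, and the bound $|\Delta_k|\le 2C_k$ yields Novikov's condition with exactly the paper's estimate $\exp(2\tau\sum_k C_k^2)$, from which the true-martingale property and $\E_{P^{(2)}}[Z_\tau]=1$ follow. Your explicit uniform-integrability argument (closure by $Z_\tau$, or the $L^2$ bound via $\mathcal{E}(2M)$) and your treatment of the adaptedness of $A^{(1)}_k$ as a functional of the record path are points the paper only asserts, so you fill them in without changing the route.
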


\begin{proof}
Apply It\^o's formula to $Z_t = e^{M_t}$ with $M_t := \sum_k\int_0^t
\Delta_k\,dB^{(2)}_k - \frac{1}{2}\sum_k\int_0^t \Delta_k^2\,ds$.
Since $d\langle M\rangle_t = \sum_k \Delta_k^2\,dt$ exactly cancels the
drift in $dM_t$,
\begin{equation}
dZ_t = Z_t \sum_k \Delta_k\,dB^{(2)}_k,
\end{equation}
with $Z_0 = 1$. The bound $\|\Delta(t)\|^2 \leq 4\sum_k C_k^2$ gives
the deterministic estimate
\begin{equation}
\E_{P^{(2)}}\!\left[\exp\!\left(\frac{1}{2}\int_0^\tau \|\Delta\|^2
\,\dt\right)\right] \leq \exp\!\left(2\tau\sum_k C_k^2\right) < \infty,
\end{equation}
verifying Novikov's condition, so $Z_t$ is a true uniformly integrable
martingale on $[0,\tau]$~\citep[Thm.~8.6.5]{oksendal2003} with
$\E_{P^{(2)}}[Z_\tau] = Z_0 = 1$. Strict positivity is immediate from
the exponential form.
\end{proof}

\begin{theorem}[Girsanov transform on path space]
\label{thm:girsanov_app}
$P^{(1)}$ and $P^{(2)}$ are mutually absolutely continuous, with
\begin{equation}
\label{eq:RN_app}
\frac{dP^{(1)}}{dP^{(2)}}(\omega) = Z_\tau(\omega).
\end{equation}
\end{theorem}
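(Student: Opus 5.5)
The plan is to construct the candidate measure $Q$ with $dQ := Z_\tau\,dP^{(2)}$ and show that $Q = P^{(1)}$. Lemma~\ref{lem:Z_martingale} gives $Z_\tau>0$ and $\E_{P^{(2)}}[Z_\tau]=1$. Hence $Q$ is a probability measure equivalent to $P^{(2)}$, and once $Q=P^{(1)}$ is established both mutual absolute continuity and \eqref{eq:RN_app} follow at once.

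\textbf{Step 1 (drift transfer).} Since $Z$ is a true martingale, the multidimensional Girsanov theorem \citep[Thm.~8.6.4]{oksendal2003} applies. It says that under $Q$ the process $\tilde B_k(t) := B^{(2)}_k(t) - \int_0^t \Delta_k\,ds$ is a standard $K$-dimensional Brownian motion. Substituting the definition of $B^{(2)}$ from \eqref{eq:Pi_sde} gives
\begin{equation*}
\tilde B_k(t) = \omega_k(t) - \int_0^t A^{(2)}_k\,ds - \int_0^t \big(A^{(1)}_k - A^{(2)}_k\big)\,ds = \omega_k(t) - \int_0^t A^{(1)}_k(\omega,s)\,ds .
\end{equation*}
So under $Q$ the coordinate process solves the same equation $d\omega_k = A^{(1)}_k\,dt + d\tilde B_k$ that $P^{(1)}$ solves with $B^{(1)}$.

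\textbf{Step 2 (uniqueness in law).} It remains to show that this SDE determines the law of $\omega$, so that $Q = P^{(1)}$. I would argue by Girsanov once more, run in reverse. Let $R$ be any probability measure on $\Omega$ under which $\omega - \int_0^\cdot A^{(1)}\,ds$ is a Brownian motion. Because $|A^{(1)}_k|\le C_k$, Novikov's condition holds exactly as in Lemma~\ref{lem:Z_martingale}. Then the density
\begin{equation*}
\exp\Big(-\sum_k\int_0^\tau A^{(1)}_k\,d\tilde B_k - \tfrac12\sum_k\int_0^\tau (A^{(1)}_k)^2\,ds\Big)
\end{equation*}
transforms $R$ into a measure under which $\omega$ itself is a Brownian motion, that is, into $\mathbb{P}_W$. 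Rewriting this density in terms of $\omega$ shows that $R$ has density $\exp\big(\sum_k\int_0^\tau A^{(1)}_k\,d\omega_k - \tfrac12\sum_k\int_0^\tau (A^{(1)}_k)^2\,ds\big)$ with respect to $\mathbb{P}_W$. This density is a fixed functional of the path $\omega$ alone, so $R$ is uniquely determined. Applying this to $R=Q$ and to $R=P^{(1)}$ gives $Q=P^{(1)}$.

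\textbf{Main obstacle.} The delicate point is that the argument needs $A^{(i)}(\omega,t)$ to be a well-defined, adapted functional of the \emph{record} path $\omega$. The setup, however, defines $\psi^{(i)}$ as a functional of the driving noise $W$. I would resolve this through the quantum filtering viewpoint. Substitute $dW_k = dI_k - \alpha_k\,dt$ into the SSE~\eqref{eq:sse}. This yields a closed SDE for $\ket{\psi}$ driven by the record, whose coefficients are locally Lipschitz on the unit sphere, bounded there, and norm-preserving. That equation has a pathwise unique strong solution, which is an adapted functional of $I$. Consequently the $\psi^{(i)}$ obtained from the noise and the one obtained from the record coincide almost surely, and $A^{(i)}$ is legitimately a function of $\omega$ as used in \eqref{eq:Pi_sde} and in both steps above. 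I would check the Lipschitz and norm-preservation claims first, since everything else is standard once this identification is in place.
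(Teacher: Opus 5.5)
Your proposal is correct and follows the paper's route: define $\mathcal{Q} = Z_\tau\,dP^{(2)}$, use Girsanov to show that $\omega_k(t) - \int_0^t A^{(1)}_k\,ds$ is a $\mathcal{Q}$-Brownian motion, and conclude $\mathcal{Q} = P^{(1)}$, with mutual absolute continuity following from $Z_\tau > 0$. The paper only asserts that identification (``precisely the law of the pushforward''). Your reverse-Girsanov uniqueness-in-law argument in Step~2, together with your point that $A^{(i)}$ must be read as a functional of the record path rather than of the driving noise, supplies exactly the justification the paper's setup and proof leave implicit.
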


\begin{proof}
By Lemma~\ref{lem:Z_martingale}, $\mathcal{Q} := Z_\tau\,dP^{(2)}$
defines a probability measure on $(\Omega, \mathcal{F}_\tau)$.
Girsanov's theorem~\citep[Thm.~8.6.6]{oksendal2003}, applied to the
$P^{(2)}$-Brownian motion $B^{(2)}$ with adapted bounded shift
$\Delta$, gives that
\begin{equation}
B^{(1)}_k(t) := B^{(2)}_k(t) - \int_0^t \Delta_k(\omega,s)\,ds
= \omega_k(t) - \int_0^t A^{(1)}_k(\omega,s)\,ds
\end{equation}
is a standard $K$-dimensional Brownian motion under $\mathcal{Q}$.
Hence under $\mathcal{Q}$ the coordinate process has the law of
$\int_0^\cdot A^{(1)}_k\,ds + B^{(1)}_k$ with $B^{(1)}$ Brownian, which
is precisely the law of the pushforward $\omega \mapsto I^{(1)}(\omega)$
of $\mathbb{P}_W$, i.e.\ $P^{(1)}$. Therefore $\mathcal{Q} = P^{(1)}$,
giving $dP^{(1)}/dP^{(2)} = Z_\tau$. The reverse direction is immediate
from $Z_\tau > 0$ $P^{(2)}$-a.s., so $dP^{(2)}/dP^{(1)} = Z_\tau^{-1}$.
\end{proof}

\begin{corollary}[Path KL divergence]
\label{cor:KL_app}
\begin{equation}
\KL[P^{(1)} \| P^{(2)}] = \frac{1}{2}\sum_k
\E_{P^{(1)}}\!\left[\int_0^\tau \Delta_k(\omega,t)^2\,\dt\right].
\end{equation}
\end{corollary}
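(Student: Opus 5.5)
The plan is to compute $\KL[P^{(1)}\|P^{(2)}] = \E_{P^{(1)}}[\log Z_\tau]$ directly, using Theorem~\ref{thm:girsanov_app}. That theorem gives $dP^{(1)}/dP^{(2)} = Z_\tau$, with $Z_\tau>0$, so the KL divergence is well defined as $\E_{P^{(1)}}[\log (dP^{(1)}/dP^{(2)})]$ once integrability is checked. Taking logarithms in \eqref{eq:Z_def},
\begin{equation*}
\log Z_\tau = \sum_k\int_0^\tau \Delta_k\,dB^{(2)}_k - \frac{1}{2}\sum_k\int_0^\tau \Delta_k^2\,ds .
\end{equation*}

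The key step is to rewrite the $P^{(2)}$-Brownian integral in terms of the $P^{(1)}$-Brownian motion. From \eqref{eq:Pi_sde} for $i=1,2$ we have $dB^{(2)}_k = d\omega_k - A^{(2)}_k\,dt$ and $d\omega_k = A^{(1)}_k\,dt + dB^{(1)}_k$. Hence $dB^{(2)}_k = dB^{(1)}_k + \Delta_k\,dt$, which is exactly the relation already used in the proof of Theorem~\ref{thm:girsanov_app}. Substituting,
\begin{equation*}
\log Z_\tau = \sum_k\int_0^\tau \Delta_k\,dB^{(1)}_k + \frac{1}{2}\sum_k\int_0^\tau \Delta_k^2\,ds .
\end{equation*}
One point of rigor here is that a stochastic integral defined under $P^{(2)}$ must be reinterpreted under $P^{(1)}$. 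Because the two measures are equivalent (Theorem~\ref{thm:girsanov_app}) and $\Delta_k$ is adapted and bounded, the integral $\int\Delta_k\,d\omega_k$ is a semimartingale integral defined pathwise up to null sets common to both measures. The decomposition is therefore valid $P^{(1)}$-almost surely.

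Next, take the $P^{(1)}$-expectation. Since $|\Delta_k|\le 2C_k$, the process $\int_0^t \Delta_k\,dB^{(1)}_k$ is a square-integrable $P^{(1)}$-martingale: its second moment is at most $4C_k^2\tau$ by the Itô isometry. Its expectation therefore vanishes. The remaining term is bounded, hence integrable, which also confirms that $\log Z_\tau\in L^1(P^{(1)})$ and that the KL divergence is finite. This yields
\begin{equation*}
\KL[P^{(1)}\|P^{(2)}] = \frac{1}{2}\sum_k \E_{P^{(1)}}\!\left[\int_0^\tau\Delta_k(\omega,t)^2\,dt\right],
\end{equation*}
as claimed. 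Here the $\Delta_k$ are the drift functionals evaluated on the coordinate process under $P^{(1)}$.

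The main obstacle I expect is the measure-change bookkeeping in the second step. One must justify that the Itô integral against $B^{(2)}$ equals the integral against $B^{(1)}$ plus $\int\Delta_k^2\,dt$ as random variables on $\Omega$, not merely in law. I would handle this by writing both as $\int\Delta_k\,d\omega_k$ minus Lebesgue integrals, and invoking the invariance of stochastic integrals of bounded adapted integrands under equivalent changes of measure. Everything else follows from the boundedness of $\Delta_k$.
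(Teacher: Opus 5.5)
Your proposal is correct and follows the paper's proof essentially verbatim. Both write the divergence as $\E_{P^{(1)}}[\log Z_\tau]$, substitute $dB^{(2)}_k = dB^{(1)}_k + \Delta_k\,dt$, and use the boundedness of $\Delta_k$ to make $\int_0^\cdot \Delta_k\,dB^{(1)}_k$ a true $P^{(1)}$-martingale with zero mean. Your added remarks, on the invariance of the stochastic integral under the equivalent change of measure and on $\log Z_\tau \in L^1(P^{(1)})$, supply details the paper leaves implicit.
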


\begin{proof}
By Theorem~\ref{thm:girsanov_app},
\begin{equation}
\KL[P^{(1)}\|P^{(2)}] = \E_{P^{(1)}}\!\left[\log\frac{dP^{(1)}}{dP^{(2)}}
\right] = \E_{P^{(1)}}[\log Z_\tau]
= \E_{P^{(1)}}\!\left[\sum_k\int_0^\tau \Delta_k\,dB^{(2)}_k
- \frac{1}{2}\int_0^\tau \|\Delta\|^2\,\dt\right].
\end{equation}
Under $P^{(1)}$, $B^{(1)}_k$ is Brownian, so $dB^{(2)}_k = dB^{(1)}_k
+ \Delta_k\,\dt$. Substituting,
\begin{equation}
\KL[P^{(1)}\|P^{(2)}] = \E_{P^{(1)}}\!\left[\sum_k\int_0^\tau
\Delta_k\,dB^{(1)}_k\right] + \frac{1}{2}\E_{P^{(1)}}\!\left[\int_0^\tau
\|\Delta\|^2\,\dt\right].
\end{equation}
Since $|\Delta_k| \leq 2C_k$ is bounded, $\int_0^\cdot \Delta_k\,
dB^{(1)}_k$ is a true $P^{(1)}$-martingale, so the first term vanishes.
\end{proof}

The Wiener KL of Section~\ref{sec:wiener} is the special
case $A^{(2)} \equiv 0$, where the pushforward map is the identity, so
$P^{(2)} = \mathbb{P}_W$ and $\Delta_k = A^{(1)}_k$,
recovering~\eqref{eq:kl_wiener}.

\section{Derivation of the drift-variance regularizer}
\label{app:drift_variance}
We derive the closed form of the drift-variance
regularizer~\eqref{eq:kl_dv} by minimizing $\KL[P_\theta \| P_c]$
over the constant drift $c \in \R^K$, where $P_c$ denotes the law of
$dY_t = c\,\dt + dW_t$ as defined in Section~\ref{sec:wiener}.
Substituting $\alpha_k^{(1)}(t) = \alpha_k^{(\theta)}(t)$ and
$\alpha_k^{(2)}(t) = c_k$ into the multi-channel Girsanov
KL~\eqref{eq:kl_quantum} (suppressing the $(\theta)$ superscript and
$t$ argument hereafter for readability),
\begin{equation}
\label{eq:kl_pc_app}
\KL[P_\theta \| P_c] = \frac{1}{2}\sum_k \E_{P_\theta}\!\left[
\int_0^T (\alpha_k - c_k)^2\,\dt\right].
\end{equation}
The KL decomposes as a sum over channels, each depending only on its
own $c_k$, so we minimize per channel. Expanding the square,
\begin{align}
\E_{P_\theta}\!\left[\int_0^T (\alpha_k - c_k)^2\,\dt\right]
&= \E_{P_\theta}\!\left[\int_0^T \alpha_k^2\,\dt\right]
- 2c_k\,\E_{P_\theta}\!\left[\int_0^T \alpha_k\,\dt\right]
+ c_k^2 T,
\end{align}
which is quadratic in $c_k$ with positive leading coefficient $T$.
Differentiating and setting to zero gives the optimal constant drift
\begin{equation}
\label{eq:cstar_app}
c_k^\star = \frac{1}{T}\,\E_{P_\theta}\!\left[\int_0^T
\alpha_k\,\dt\right] =: \bar\alpha_k,
\end{equation}
the time-and-ensemble drift mean. Substituting~\eqref{eq:cstar_app}
back into~\eqref{eq:kl_pc_app},
\begin{equation}
\dvr := \min_{c \in \R^K} \KL[P_\theta \| P_c]
= \frac{1}{2}\sum_k \E_{P_\theta}\!\left[\int_0^T (\alpha_k -
\bar\alpha_k)^2\,\dt\right],
\end{equation}
which is~\eqref{eq:kl_dv}. The integrand is the squared deviation of
the instantaneous drift from its time-and-ensemble mean:
$\dvr$ vanishes if and only if $\alpha_k^{(\theta)}(t) =
\bar\alpha_k$ for $(P_\theta \otimes \dt)$-almost every $(\omega, t)$
and every channel $k$, i.e.\ the trajectory drift is constant in time
and across realizations.

\section{Proof: Girsanov KL reduces to fluence in the QDC-compatible case}
\label{app:reduction}

Following \citet{villanueva2024}, consider a Lindblad equation with
$n_c$ Lindblad operators $\{C_a\}$ and real symmetric noise matrix $D$. The
unravelling admits a gauge freedom: an invertible $A \in \text{GL}(n_c,
\mathbb{C})$ transforms $\tilde{C}_b = C_a A_{ab}$, $\tilde{D} = A D
A^\dagger$ while leaving the Lindbladian dissipator invariant; the
diffusive SSE in the transformed basis takes the form
\begin{equation}
d\psi = -iH\psi\,\dt
- \frac{1}{2}\sum_{a,b}\tilde{D}_{ab}
\left(\tilde{C}_b^\dagger \tilde{C}_a
- 2\tilde{c}_a \tilde{C}_b
+ \tilde{c}_a \tilde{c}_b\right)\psi\,\dt
+ \sum_a(\tilde{C}_a - \tilde{c}_a)\psi\,d\tilde{W}_a,
\label{eq:transformed_sse}
\end{equation}
where $\tilde{c}_a = \psi^\dagger \tilde{C}_a^{(h)} \psi$ and
$\tilde{C}_a^{(h)} = \tfrac{1}{2}(\tilde{C}_a + \tilde{C}_a^\dagger)$
is the Hermitian part. Suppose there exists a choice of $A$ such that
$\tilde{D}$ remains real symmetric and $\tilde{C}_a = -iH_a$,
identifying the transformed Lindblad operators with $-i$ times the
control Hamiltonians. Anti-Hermiticity of $\tilde{C}_a$ implies
$\tilde{C}_a^{(h)} = 0$, hence $\tilde{c}_a = 0$ identically: the
nonlinear backaction terms in~\eqref{eq:transformed_sse} drop out and
the SSE becomes linear in $\psi$. Combining with the Hamiltonian
control $H = H_0 + u_a H_a$, the diffusion and Hamiltonian terms
acting along $H_a$ combine into $-i H_a \psi (u_a\,\dt + d\tilde{W}_a)$,
so the controls $u_a$ enter as deterministic shifts of the Brownian
increments \citep[Eq.~20]{villanueva2024}. The corresponding
measurement record in the transformed basis is
\begin{equation}
dY_a^{(u)} = u_a\,\dt + d\tilde{W}_a,
\end{equation}
with drift $u_a$ deterministic and state-independent.

Applying Corollary~\ref{cor:KL_app} to the controlled record $dY_a^{(u)}$
versus the uncontrolled record $dY_a^{(0)} = d\tilde{W}_a$ (drifts
$u_a$ and $0$, both deterministic), the expectation in the KL
collapses and
\begin{equation}
\KL[P_u \| P_0] = \frac{1}{2}\int_0^T \sum_a u_a(t)^2\,\dt,
\end{equation}
which is the control fluence. The structural condition
$\tilde{C}_a = -iH_a$ is restrictive: it requires that an invertible
$A$ exists mapping the original Lindblad operators $\{C_a\}$ to
$\{-iH_a\}$, which forces the noise channels (after mixing) to act
in the same algebraic directions as the controls. The Wiener KL
of Section~\ref{sec:wiener} is a different quantity, computed in
the original physical basis where the measurement-record drift is
$\langle L_k + L_k^\dagger\rangle = 2\langle L_k^{(h)}\rangle$, twice
the expectation of the Hermitian part. This drift is state-dependent
and generically unrelated to $u_a$; the two KLs coincide only when
the QDC conditions hold and the change of basis $A$ brings the
original $\{C_a\}$ to anti-Hermitian form.

\section{Additional Experimental Details}
\label{app:datadump}

This appendix collects multi-seed statistics and robustness
experiments referenced in Section~\ref{sec:experiments}, organized
to mirror the body's experiment ordering.
Section~\ref{app:common_setup} fixes the common training setup; the
per-experiment subsections then list only the values that differ.

\subsection{Common training setup}
\label{app:common_setup}

We have the following common architectural choices for all our experiments:

\begin{itemize}
\item \textbf{Controller}: each control channel $u_a^{(\theta)}(t)$ is a
truncated Fourier series in time with $16$ Fourier modes (i.e.\
$16$ cosine and $16$ sine coefficients per channel plus a DC
offset).
\item \textbf{Optimizer}: Adam with learning rate $10^{-3}$.
\item \textbf{Solver}: Euler-Maruyama (EM) or Exponential-Split (ExpSplit), see Appendix~\ref{app:scaling}
\item \textbf{Fidelity reported}: at every snapshot we evaluate the
exact-Lindblad fidelity by integrating the master equation under the
current controller and computing $F = \langle\psi_{\mathrm{tgt}}|\rho(T)|\psi_{\mathrm{tgt}}\rangle$.
\item \textbf{Selection rule}: the reported $F$ for any given run is
the maximum over snapshots of the exact-Lindblad fidelity.
\item \textbf{Hardware}: single NVIDIA A100 GPU (40\,GB);
total project compute reported in Appendix~\ref{app:compute}.
\end{itemize}

Each subsection below lists $T$, the total number of
optimiser steps, the regulariser weights actually swept, the
fluence-warm-up length, and the number of training seeds.

\subsection{Single-qubit amplitude damping}
\label{app:ampdamp}
Tables~\ref{tab:ampdamp_fidelity_full} and~\ref{tab:ampdamp_variance_full}
report the multi-seed statistics for the fidelity and trajectory-variance
comparison underlying Section~\ref{sec:exp_ampdamp}.

\begin{table}[t]
\centering
\caption{\textbf{Amplitude damping (full statistics):} Fidelity $F$
for $\ket{+}\!\to\!\ket{Y}$, mean $\pm$ std across 4 seeds. Bold marks
the best in each row.}
\label{tab:ampdamp_fidelity_full}
\begin{tabular}{@{}ccccc@{}}
\toprule
$\gamma T$ & $\lambda_{\klw}{=}\lambda_{\dvr}{=}0$ & $\lambda_{\klw}{=}5$ & $\lambda_{\dvr}{=}5$ & PPO \\
\midrule
$0.1$ & $0.9942 \pm 0.0001$ & $\mathbf{0.9952 \pm 0.0001}$ & $0.9917 \pm 0.0001$ & $0.9943 \pm 0.0001$ \\
$0.5$ & $0.9831 \pm 0.0001$ & $\mathbf{0.9900 \pm 0.0001}$ & $0.9698 \pm 0.0001$ & $0.9856 \pm 0.0003$ \\
$1.0$ & $0.9772 \pm 0.0002$ & $\mathbf{0.9863 \pm 0.0001}$ & $0.9555 \pm 0.0001$ & $0.9773 \pm 0.0007$ \\
$2.0$ & $0.9715 \pm 0.0004$ & $\mathbf{0.9810 \pm 0.0001}$ & $0.9458 \pm 0.0004$ & $0.9684 \pm 0.0017$ \\
$5.0$ & $0.9633 \pm 0.0001$ & $\mathbf{0.9673 \pm 0.0001}$ & $0.9222 \pm 0.0027$ & $0.9451 \pm 0.0002$ \\
\bottomrule
\end{tabular}
\end{table}

\begin{table}[t]
\centering
\caption{\textbf{Amplitude damping (full statistics)}.
SSE-trajectory population variance
$\sum_k \int_0^T \mathrm{Var}[P_k(t)]\,dt$ for
$\ket{+}\!\to\!\ket{Y}$, mean $\pm$ std across 4 seeds
($128$ SSE samples per seed). Bold marks the best in each row.}
\label{tab:ampdamp_variance_full}
\begin{tabular}{@{}ccccc@{}}
\toprule
$\gamma T$ & $\lambda_{\klw}{=}\lambda_{\dvr}{=}0$ & $\lambda_{\klw}{=}5$ & $\lambda_{\dvr}{=}5$ & PPO \\
\midrule
$0.1$ & $0.0032 \pm 0.0000$ & $\mathbf{0.0013 \pm 0.0000}$ & $0.0019 \pm 0.0000$ & $0.0036 \pm 0.0002$ \\
$0.5$ & $0.0106 \pm 0.0001$ & $\mathbf{0.0021 \pm 0.0000}$ & $0.0055 \pm 0.0000$ & $0.0090 \pm 0.0001$ \\
$1.0$ & $0.0166 \pm 0.0002$ & $\mathbf{0.0022 \pm 0.0000}$ & $0.0077 \pm 0.0001$ & $0.0134 \pm 0.0005$ \\
$2.0$ & $0.0321 \pm 0.0024$ & $\mathbf{0.0021 \pm 0.0000}$ & $0.0077 \pm 0.0001$ & $0.0192 \pm 0.0019$ \\
$5.0$ & $0.0411 \pm 0.0061$ & $\mathbf{0.0016 \pm 0.0000}$ & $0.0065 \pm 0.0002$ & $0.0243 \pm 0.0040$ \\
\bottomrule
\end{tabular}
\end{table}

\begin{figure}
    \centering
    \includegraphics[width=\linewidth]{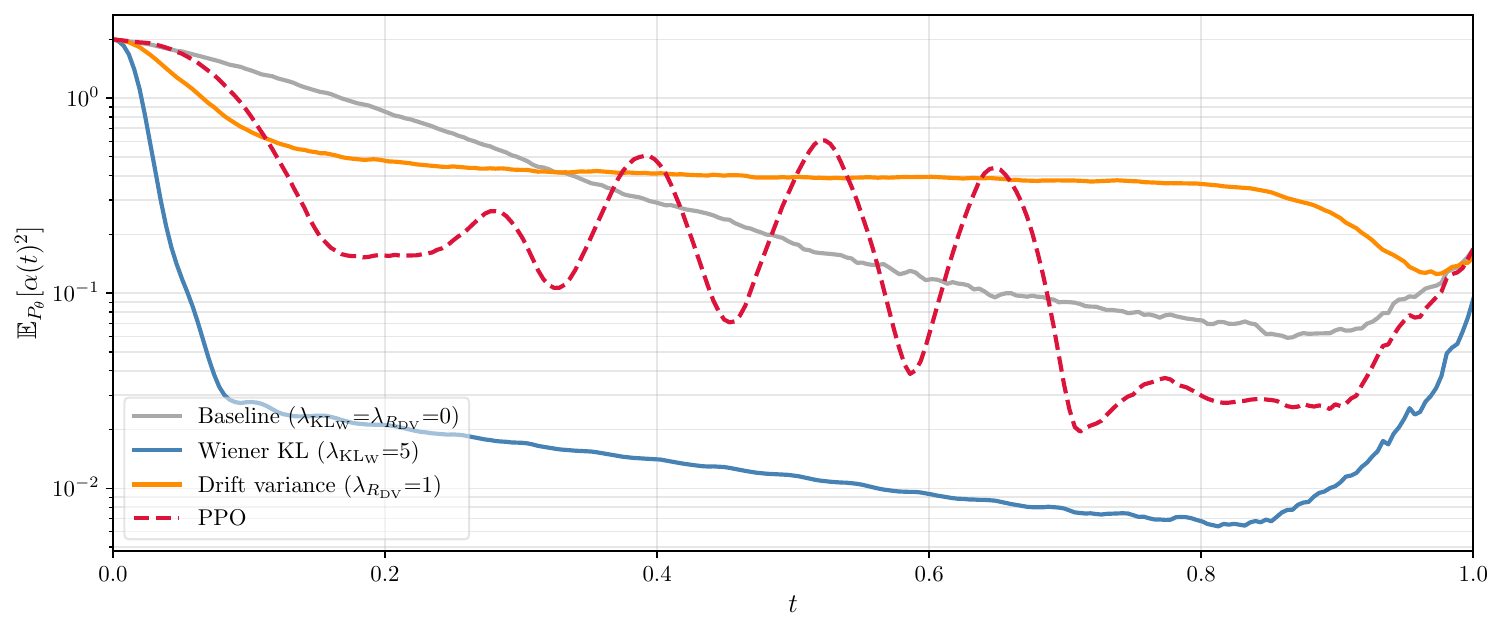}
    \caption{\textbf{Wiener KL integrand $\mathbb{E}_\psi[\alpha(t)^2]$, $\gamma T = 2$.} Trajectory-averaged drift squared on a
    logarithmic axis. Wiener KL suppresses $\alpha$ toward
    $\ker L$, reducing the time-integrated value
    $\tfrac{1}{2}\int\langle\alpha^2\rangle\,dt$ from $0.230$
    (baseline) to $0.031$ — an order-of-magnitude reduction
    consistent with the trajectory routing seen in
    Figure~\ref{fig:ampdamp_var}. Drift-variance remains near
    the baseline level ($0.237$), while PPO achieves an
    intermediate reduction ($0.160$).}
    \label{fig:ampdamp_alpha2}
\end{figure}

% Amp damping
\paragraph{Experiment-specific parameters.}
\begin{itemize}
\item $T=1.0$ in dimensionless units, so $\gamma T = \gamma$.
\item Decay rate sweep: $\gamma T \in \{0.1, 0.5, 1, 2, 5\}$.
\item Total optimiser steps: $5000$.
\item Solver: ExpSplit
\item Discretised time grid: $256$ points per trajectory, snapshot every
$500$ steps; $256$ SSE samples per gradient step.
\item Regulariser weights: $\lambda_{\klw}\in\{0,5\}$,
$\lambda_{\dvr}\in\{0,1,5\}$, $\lambda_{\mathrm{flu}}=0.01$.
\item Fourier controller: $n_{\mathrm{modes}}=20$, $\mathrm{init\_scale}=0.1$.
\item Seeds: $4$ per cell.
\end{itemize}

\subsection{STIRAP}
\label{app:stirap}

Table~\ref{tab:stirap_multiseed} reports multi-seed fidelity and
$\ket{e}$-exposure statistics for the two-noise-level comparison in
Section~\ref{sec:exp_dark}, and Table~\ref{tab:stirap_robustness}
reports the $\gamma$-perturbation robustness experiment, in which
protocols trained at $(\gamma T)_{\mathrm{train}} = 10$ are evaluated
at five test noise levels via exact Lindblad integration.

\begin{table}[t]
\centering
\caption{\textbf{STIRAP results.} Fidelity $F$, peak $\ket{e}$
population, and time-integrated $\ket{e}$ exposure for the transfer
$\ket{g_1}\to\ket{g_2}$, mean $\pm$ std across 3 seeds. Bold marks
the best in each column per $\gamma T$ block; baseline and Wiener KL
are statistically indistinguishable in fidelity at both noise levels.}
\label{tab:stirap_multiseed}
\begin{tabular}{@{}lcccc@{}}
\toprule
$\gamma T$ & method & $F_{\mathrm{exact}}$ & peak $\ket{e}$ & $\int_0^T \ket{e}\,dt$ \\
\midrule
$1.0$ & baseline ($\lambda_{\klw}{=}\lambda_{\dvr}{=}0$)  & $0.9959 \pm 0.0003$ & $0.056 \pm 0.016$ & $0.024 \pm 0.000$ \\
$1.0$ & Wiener KL ($\lambda_{\klw}{=}1$) & $\mathbf{0.9961 \pm 0.0005}$ & $\mathbf{0.054 \pm 0.011}$ & $\mathbf{0.022 \pm 0.001}$ \\
$1.0$ & PPO                       & $0.9614 \pm 0.0001$ & $0.160 \pm 0.003$ & $0.164 \pm 0.001$ \\
\midrule
$10.0$ & baseline ($\lambda_{\klw}{=}\lambda_{\dvr}{=}0$)  & $\mathbf{0.9797 \pm 0.0010}$ & $0.097 \pm 0.028$ & $0.026 \pm 0.001$ \\
$10.0$ & Wiener KL ($\lambda_{\klw}{=}1$) & $0.9790 \pm 0.0005$ & $\mathbf{0.043 \pm 0.012}$ & $\mathbf{0.017 \pm 0.001}$ \\
$10.0$ & PPO                       & $0.7289 \pm 0.0000$ & $0.163 \pm 0.010$ & $0.162 \pm 0.001$ \\
\bottomrule
\end{tabular}
\end{table}

\begin{table}[t]
\centering
\caption{\textbf{STIRAP $\gamma$-perturbation robustness with forbidden-state exposure.}
Each protocol trained at $(\gamma T)_{\mathrm{train}} = 10$ and
evaluated at the indicated test value. Mean $\pm$ std across 3 seeds.}
\label{tab:stirap_robustness}
\begin{tabular}{@{}llccc@{}}
\toprule
$(\gamma T)_{\mathrm{test}}$ & method & $F_{\mathrm{exact}}$ & peak $\ket{e}$ & $\int_0^T \ket{e}\,dt$ \\
\midrule
$2.0$  & baseline ($\lambda_{\klw}{=}\lambda_{\dvr}{=}0$)  & $0.9945 \pm 0.0006$ & $0.113 \pm 0.028$ & $0.023 \pm 0.001$ \\
$2.0$  & Wiener KL ($\lambda_{\klw}{=}1$) & $0.9945 \pm 0.0010$ & $\mathbf{0.042 \pm 0.007}$ & $\mathbf{0.015 \pm 0.001}$ \\
$2.0$  & PPO                       & $0.733 \pm 0.001$   & $0.282 \pm 0.011$ & $0.215 \pm 0.002$ \\
\midrule
$4.0$  & baseline ($\lambda_{\klw}{=}\lambda_{\dvr}{=}0$)  & $0.9904 \pm 0.0006$ & $0.108 \pm 0.028$ & $0.024 \pm 0.001$ \\
$4.0$  & Wiener KL ($\lambda_{\klw}{=}1$) & $0.9906 \pm 0.0006$ & $\mathbf{0.042 \pm 0.007}$ & $\mathbf{0.016 \pm 0.001}$ \\
$4.0$  & PPO                       & $0.784 \pm 0.001$   & $0.239 \pm 0.012$ & $0.199 \pm 0.001$ \\
\midrule
$10.0$ (train) & baseline ($\lambda_{\klw}{=}\lambda_{\dvr}{=}0$)  & $0.9797 \pm 0.0009$ & $0.097 \pm 0.023$ & $0.026 \pm 0.001$ \\
$10.0$ (train) & Wiener KL ($\lambda_{\klw}{=}1$) & $0.9792 \pm 0.0004$ & $\mathbf{0.043 \pm 0.009}$ & $\mathbf{0.017 \pm 0.000}$ \\
$10.0$ (train) & PPO                       & $0.729 \pm 0.000$   & $0.163 \pm 0.010$ & $0.162 \pm 0.001$ \\
\midrule
$20.0$ & baseline ($\lambda_{\klw}{=}\lambda_{\dvr}{=}0$)  & $0.9626 \pm 0.0017$ & $0.084 \pm 0.016$ & $0.026 \pm 0.001$ \\
$20.0$ & Wiener KL ($\lambda_{\klw}{=}1$) & $0.9603 \pm 0.0008$ & $\mathbf{0.047 \pm 0.011}$ & $\mathbf{0.018 \pm 0.000}$ \\
$20.0$ & PPO                       & $0.521 \pm 0.000$   & $0.095 \pm 0.004$ & $0.110 \pm 0.001$ \\
\midrule
$30.0$ & baseline ($\lambda_{\klw}{=}\lambda_{\dvr}{=}0$)  & $0.9446 \pm 0.0028$ & $0.071 \pm 0.013$ & $0.026 \pm 0.001$ \\
$30.0$ & Wiener KL ($\lambda_{\klw}{=}1$) & $0.9410 \pm 0.0013$ & $\mathbf{0.050 \pm 0.014}$ & $\mathbf{0.019 \pm 0.000}$ \\
$30.0$ & PPO                       & $0.369 \pm 0.000$   & $0.061 \pm 0.002$ & $0.075 \pm 0.000$ \\
\bottomrule
\end{tabular}
\end{table}

\begin{figure}[t]
  \centering
  \includegraphics[width=\linewidth]{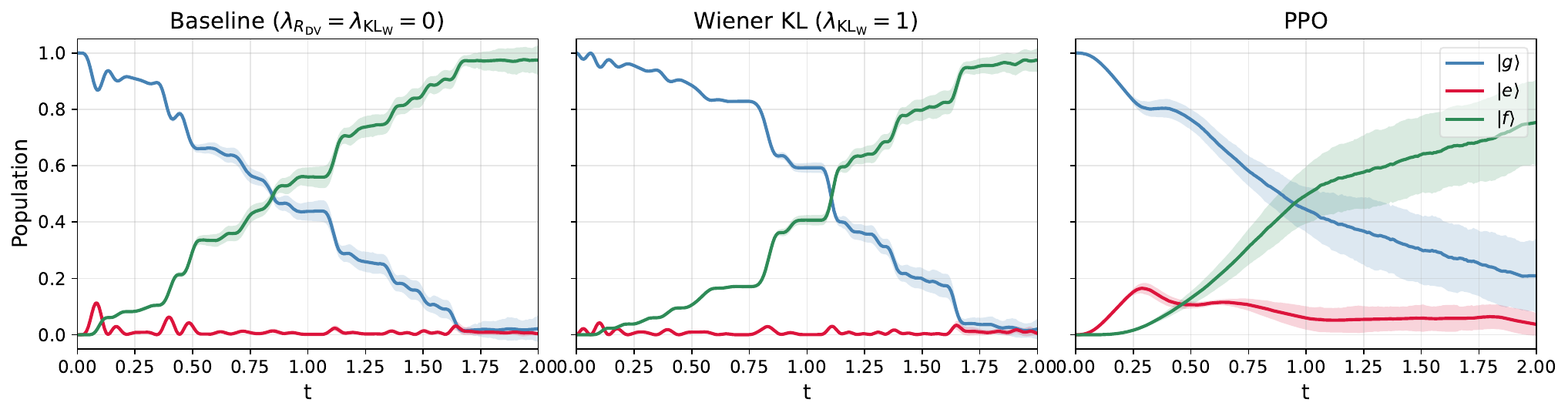}
  \caption{\textbf{STIRAP}, $\gamma T = 10$. SSE-trajectory populations (mean across 128 samples with $\pm 1\sigma$ bands) for baseline ($\lambda{=}0$), Wiener KL ($\lambda_{\klw}{=}1$), and PPO.}
  \label{fig:stirap-compare}
\end{figure}

% STIRAP
\paragraph{Experiment-specific parameters.}
\begin{itemize}
\item $T=2.0$ in dimensionless units.
\item Decay rate sweep: $\gamma \in \{0.5, 5\}$ (cleanup, $\gamma T \in \{1,10\}$);
robustness evaluated at $\gamma_{\mathrm{test}}\in\{1,2,5,10,15\}$ from
policies trained at $\gamma=5$.
\item Total optimiser steps: $10\,000$.
\item Solver: ExpSplit
\item Discretised time grid: $256$ points per trajectory, snapshot every
$500$ steps; $64$ SSE samples per gradient step.
\item Regulariser weights: $\lambda_{\klw}\in\{0,1\}$, $\lambda_{\dvr}=0$,
$\lambda_{\mathrm{flu}}=0$.
\item Fourier controller: $n_{\mathrm{modes}}=16$, $\mathrm{init\_scale}=1.0$.
\item Seeds: $3$ per cell.
\end{itemize}

\subsection{Diamond system}
\label{app:diamond}

Table~\ref{tab:dr3_multiseed} reports the multi-seed fidelity
statistics for the Wiener KL ablation at $\gamma = 2$ referenced in
Section~\ref{sec:exp_dark}, and Table~\ref{tab:dr3_robustness}
reports the $\gamma$-perturbation robustness experiment, with
protocols trained at $\gamma_{\mathrm{train}} = 2$ and evaluated at
six test noise levels spanning $\gamma_{\mathrm{test}} \in
\{1, 1.5, 2, 2.5, 3, 5\}$.

\begin{table}[t]
\centering
\caption{\textbf{Diamond system: Wiener KL ablation at $\gamma=2$, $T=1$.}
Mean $\pm$ std across 3 seeds.}
\label{tab:dr3_multiseed}
\begin{tabular}{@{}lc@{}}
\toprule
method & $F_{\mathrm{exact}}$ \\
\midrule
baseline ($\lambda_{\klw}{=}\lambda_{\dvr}{=}0$)    & $0.665 \pm 0.017$ \\
Wiener KL $\lambda_{\klw}{=}0.5$   & $0.810 \pm 0.004$ \\
Wiener KL $\lambda_{\klw}{=}5$     & $\mathbf{0.834 \pm 0.003}$ \\
PPO                         & $0.605 \pm 0.005$ \\
\bottomrule
\end{tabular}
\end{table}

\begin{table}[t]
\centering
\caption{\textbf{Diamond system: $\gamma$-perturbation robustness.}
Mean $\pm$ std across 3 seeds.}
\label{tab:dr3_robustness}
\begin{tabular}{@{}ccccc@{}}
\toprule
$\gamma_{\mathrm{test}}$ & $\lambda_{\klw}{=}\lambda_{\dvr}{=}0$ & $\lambda_{\klw}{=}0.5$ & $\lambda_{\klw}{=}5$ & PPO \\
\midrule
$1.0$         & $0.808 \pm 0.010$ & $0.895 \pm 0.002$ & $\mathbf{0.903 \pm 0.002}$ & $0.772 \pm 0.001$ \\
$1.5$         & $0.732 \pm 0.014$ & $0.851 \pm 0.003$ & $\mathbf{0.868 \pm 0.003}$ & $0.681 \pm 0.002$ \\
$2.0$ (train) & $0.665 \pm 0.017$ & $0.810 \pm 0.004$ & $\mathbf{0.834 \pm 0.003}$ & $0.601 \pm 0.002$ \\
$2.5$         & $0.606 \pm 0.018$ & $0.771 \pm 0.004$ & $\mathbf{0.803 \pm 0.003}$ & $0.531 \pm 0.002$ \\
$3.0$         & $0.554 \pm 0.020$ & $0.734 \pm 0.004$ & $\mathbf{0.772 \pm 0.004}$ & $0.470 \pm 0.002$ \\
$5.0$         & $0.393 \pm 0.020$ & $0.607 \pm 0.006$ & $\mathbf{0.665 \pm 0.004}$ & $0.288 \pm 0.001$ \\
\bottomrule
\end{tabular}
\end{table}

% Diamond
\paragraph{Experiment-specific parameters.}
\begin{itemize}
\item $T=1.0$ in dimensionless units.
\item Training $\gamma=2.0$; robustness evaluated at
$\gamma_{\mathrm{test}}\in\{1,1.5,2,2.5,3,5\}$ from policies trained at
$\gamma=2$.
\item Total optimiser steps: $5000$.
\item Solver: EM
\item Discretised time grid: $256$ points per trajectory, snapshot every
$500$ steps; $256$ SSE samples per gradient step.
\item Regulariser weights: $\lambda_{\klw}\in\{0,0.5,5\}$, $\lambda_{\dvr}=0$,
$\lambda_{\mathrm{flu}}=0.001$.
\item Fourier controller: $n_{\mathrm{modes}}=20$, $\mathrm{init\_scale}=0.1$.
\item Seeds: $3$ per cell.
\end{itemize}

\subsection{4 qubit chain asymmetry sweep}
\label{app:chain4_full}

Table~\ref{tab:chain4_full} reports the complete five-point
asymmetry sweep ($\rho\in\{1,1.5,2,4,8\}$) summarised in
Section~\ref{sec:exp_chain}. Means and standard deviations are taken
over three seeds for $\rho\in\{2,4,8\}$; $\rho=1$ and $\rho=1.5$ are
reported on a single seed as control points. The single
representative trajectories shown in
Figure~\ref{fig:chain4_asym_pops} are taken from the seed with the
highest \emph{final-step} exact-Lindblad fidelity at each $\rho$;
per-seed best fidelities are reported in the lower half of
Table~\ref{tab:chain4_full}. Due to high noise values, training was unstable so we implemented a fluence warm-up of $2000$ steps.

\begin{table}[h]
\centering
\caption{\textbf{Chain-4 asymmetry sweep:} maximum exact-Lindblad
fidelity for excitation transfer site-0 to site-3, by
edge-to-interior dephasing ratio $\rho$. \emph{Top:} mean$\pm$std
over three seeds where applicable. \emph{Bottom:} per-seed best
fidelities for the multi-seed rows.}
\label{tab:chain4_full}
\begin{tabular}{lccc}
\toprule
$\rho$ & Baseline $F$ & Drift-variance ($\lambda_{\dvr}=0.02$) $F$ & $\Delta$ \\
\midrule
1.0 (uniform)        & 0.71  & 0.71  & $\phantom{+}0.0$\,pp \\
1.5                  & 0.69  & 0.69  & $\phantom{+}0.0$\,pp \\
2.0                  & $0.64\pm0.01$ & $0.62\pm0.03$ & $-2.4$\,pp \\
4.0                  & $0.48\pm0.02$ & $0.52\pm0.13$ & $+4.5$\,pp \\
8.0                  & $0.22\pm0.04$ & $0.42\pm0.11$ & $+19.9$\,pp \\
\bottomrule
\end{tabular}

\vspace{0.75em}

\begin{tabular}{llccc}
\toprule
$\rho$ & policy & seed 0 & seed 1 & seed 2 \\
\midrule
2.0 & baseline        & 0.637 & 0.636 & 0.651 \\
2.0 & drift-variance  & 0.606 & 0.591 & 0.654 \\
4.0 & baseline        & 0.482 & 0.503 & 0.454 \\
4.0 & drift-variance  & 0.609 & 0.589 & 0.374 \\
8.0 & baseline        & 0.193 & 0.209 & 0.262 \\
8.0 & drift-variance  & 0.501 & 0.456 & 0.306 \\
\bottomrule
\end{tabular}
\end{table}

The wide drift-variance spread at $\rho=4$ and $\rho=8$
(std~$\approx0.11$--$0.13$) is concentrated in seed~2, which finds
a qualitatively different basin in both cases. The mean improvement
over the baseline holds at all asymmetry levels $\rho\geq 4$ even
including this seed.

% Chain-4
\paragraph{Experiment-specific parameters.}
\begin{itemize}
\item $T=3.0$ in dimensionless units; rates are
in inverse-time units of the same scale.
\item Per-site decoherence rates $\kappa_{\mathrm{int}}=0.3$,
$\gamma_{\mathrm{int}}=0.05$ at all interior sites, scaled by
$\rho$ on the two edge sites
($\kappa_0\!=\!\kappa_3\!=\!\rho\,\kappa_{\mathrm{int}}$ and similarly
for $\gamma$). The scale is chosen so that
$T\,\kappa_{\mathrm{int}}\!\approx\!1$, placing
decoherence on the same timescale as the protocol.
\item Asymmetry sweep: $\rho\in\{1,1.5,2,4,8\}$.
\item Total optimiser steps: $10\,000$.
\item Solver: ExpSplit
\item Discretised time grid: $256$ points per trajectory, snapshot
every $500$ steps; $64$ SSE samples per gradient step.
\item Regulariser weights: $\lambda_{\klw}=0$,
$\lambda_{\dvr}\in\{0,0.02\}$,
$\lambda_{\mathrm{flu}}=0.005$.
\item Fluence warm-up: $2000$ steps (penalty held at zero, then
enabled).
\item Fourier controller: $n_{\mathrm{modes}}=16$,
$\mathrm{init\_scale}=1.0$.
\item Seeds: $1$ at $\rho\in\{1,1.5\}$, $3$ at $\rho\in\{2,4,8\}$.
\end{itemize}

\subsection{Hardware-calibrated 6 qubit chain experiment}
\label{app:ibm_chain}
Tables~\ref{tab:ibm_chain_rates} and~\ref{tab:ibm_chain_full}
report the per-site coherence times and the multi-seed fidelity
statistics for the six-qubit excitation-transfer experiment of
Section~\ref{sec:exp_chain}. The calibration source is retrieved from the IBM Quantum Platform
at 11:33:48 UTC on 3 May 2026 for processor
\texttt{ibm\_kingston}~\citep{ibmquantum}. IBM Quantum calibrations
refresh approximately daily; the snapshot used here is fixed, so the
per-site rates in Table~\ref{tab:ibm_chain_rates} are reproducible
exactly from the specific timestamp but will differ from any newer
snapshot. Per-qubit $T_1$ and $T_2$ are converted to Lindblad rates
via
\[
  \gamma_i = 1/T_1^{(i)}, \qquad
  \kappa_i = \tfrac{1}{2}\!\max\!\Big(0,\;
              1/T_2^{(i)} - 1/(2 T_1^{(i)})\Big),
\]
yielding two per-site Lindblad operators per qubit
($\sqrt{\gamma_i}\,\sigma_-^{(i)}$ amplitude damping and
$\sqrt{\kappa_i}\,\sigma_z^{(i)}$ dephasing). The drive is a sum of
five nearest-neighbor exchange couplings,
$H_{\mathrm{ctrl}}(t)=\sum_{i=0}^{4} g_{i,i+1}(t)\,
\tfrac{1}{2}(\sigma_x^{(i)}\sigma_x^{(i+1)}+\sigma_y^{(i)}\sigma_y^{(i+1)})$,
with no fixed exchange or local drive. Training used a fluence
warm-up for stability.

\begin{figure}[t]
    \centering
    \includegraphics[width=\linewidth]{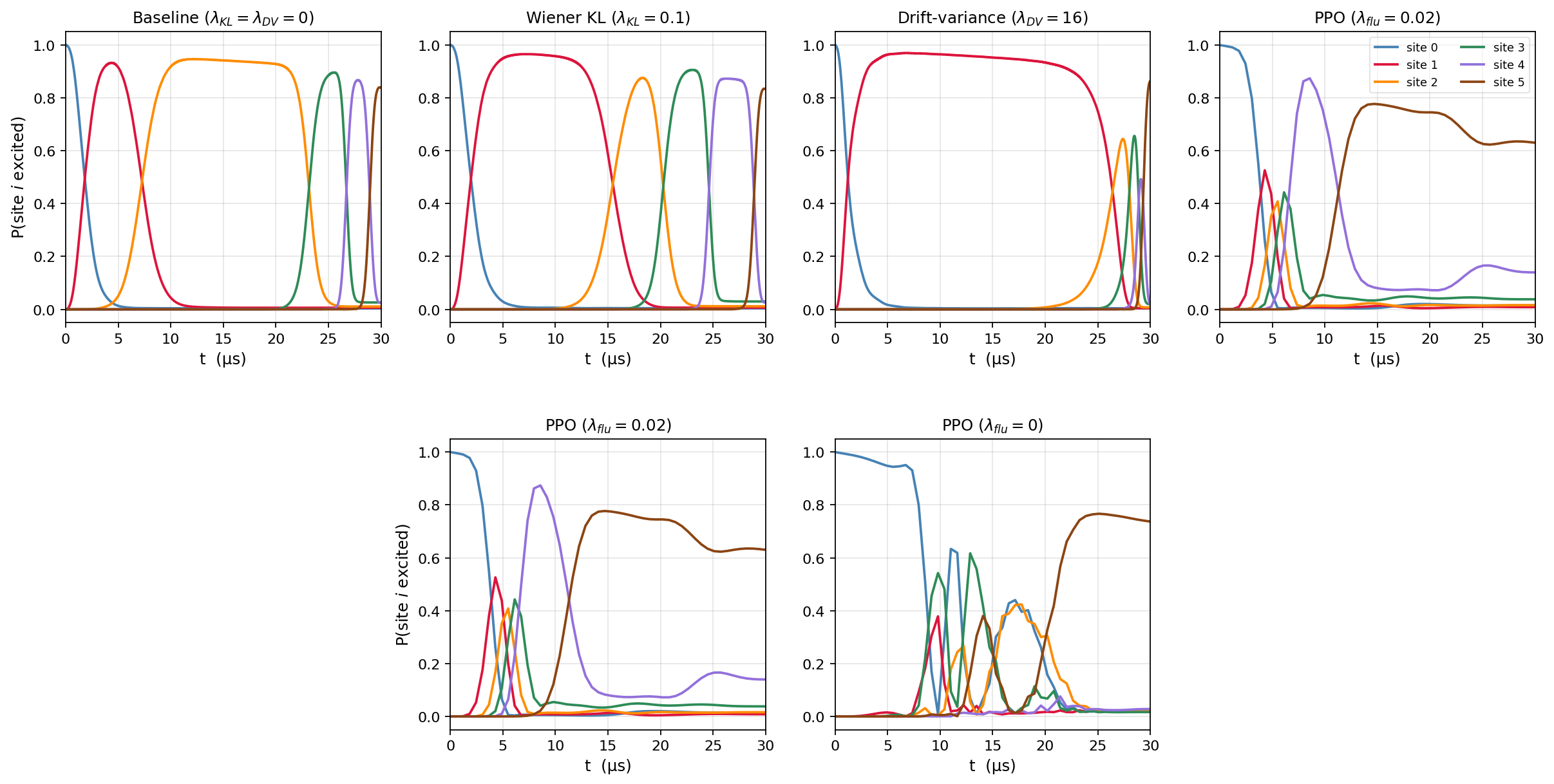}
    \caption{\textbf{IBM Kingston q14$\to$q9 evolutions.} Per-site
    excitation populations
    $\langle\psi(t)|(I-\sigma_z^{(i)})/2|\psi(t)\rangle$
    (i.e., the probability that site $i$ holds the excitation).
    \emph{Top row}: the four policies of
    Table~\ref{tab:ibm_chain_full} at matched squared-fluence penalty
    $\lambda_{\mathrm{flu}}=0.02$ — baseline
    ($\lambda_{\klw}{=}\lambda_{\dvr}{=}0$), Wiener KL
    ($\lambda_{\klw}{=}0.1$), drift-variance
    ($\lambda_{\dvr}{=}16$), and PPO. Drift-variance reaches the
    highest final-time excitation on q9.
    \emph{Bottom row}: PPO compared at matched fluence penalty
    ($\lambda_{\mathrm{flu}}=0.02$) and with the
    fluence penalty removed entirely ($\lambda_{\mathrm{flu}}=0$).}
    \label{fig:chain6_evolutions}
\end{figure}

\begin{table}[h]
\centering
\caption{Per-site coherence times for the IBM Kingston q14$\to$q9
chain, with derived $\gamma+\kappa$ rate. Site index runs from
source (left, q14) to target (right, q9).}
\label{tab:ibm_chain_rates}
\begin{tabular}{ccccc}
\toprule
Site & Qubit & $T_1\,(\mu\mathrm{s})$ & $T_2\,(\mu\mathrm{s})$ & $\gamma+\kappa\,(\mu\mathrm{s}^{-1})$ \\
\midrule
0 & q14 & 147.5 & 191.2 & 0.0077 \\
1 & q13 & 402.9 & 352.0 & 0.0033 \\
2 & q12 & 355.1 & 272.8 & 0.0039 \\
3 & q11 & 402.7 &  93.3 & 0.0072 \\
4 & q10 & 346.5 &  17.0 & 0.0316 \\
5 &  q9 & 137.8 &  61.8 & 0.0135 \\
\bottomrule
\end{tabular}
\end{table}

\begin{table}[h]
\centering
\caption{\textbf{IBM Kingston q14$\to$q9:} maximum exact-Lindblad
fidelity for excitation transfer, $T=30\,\mu\mathrm{s}$,
linear fluence penalty $\lambda_{\mathrm{flu}}=0.02$. Mean $\pm$ std
across 3 seeds where applicable. Bold marks the best. The bottom
block reports two control experiments (overdriven KL, unconstrained
PPO) discussed in Section~\ref{sec:exp_chain}.}
\label{tab:ibm_chain_full}
\begin{tabular}{@{}lc@{}}
\toprule
method & $F_{\mathrm{exact}}$  \\
\midrule
baseline ($\lambda_{\klw}{=}\lambda_{\dvr}{=}0$, $\lambda_{\mathrm{flu}}=0.02$)        & $0.837 \pm 0.004$ \\
Wiener KL ($\lambda_{\klw}{=}0.1$, $\lambda_{\mathrm{flu}}=0.02$) & $0.831 \pm 0.004$ \\
Drift-variance ($\lambda_{\dvr}{=}16$, $\lambda_{\mathrm{flu}}=0.02$)                  & $\mathbf{0.863 \pm 0.002}$ \\
PPO ($\lambda_{\mathrm{flu}}=0.02$)                        & $0.604 \pm 0.041$ \\
\midrule
Wiener KL ($\lambda_{\klw}{=}16$, single seed)          & $0.427$ \\
PPO ($\lambda_{\mathrm{flu}}=0$)                   & $0.737$\\
\bottomrule
\end{tabular}
\end{table}

% Chain-6 (IBM Kingston)
\paragraph{Experiment-specific parameters.}
\begin{itemize}
\item $T=30\,\mu\mathrm{s}$ in physical units;
rates above are in $\mu\mathrm{s}^{-1}$.
\item Initial state $\ket{100000}$ (excitation at site 0, q14),
target $\ket{000001}$ (excitation at site 5, q9).
\item Total optimiser steps: $3000$.
\item Solver: ExpSplit
\item Discretised time grid: $1024$ points per trajectory, snapshot
every $250$ steps; $64$ SSE samples per gradient step.
\item Regulariser weights: $\lambda_{\mathrm{flu}}=0.02$;
$\lambda_{\dvr}\in\{0,16\}$;
$\lambda_{\klw}\in\{0,0.1\}$, with the non-zero value chosen so
that the average loss contribution of the KL term matches that of
the drift-variance term at $\lambda_{\dvr}=16$ (we also tried
$\lambda_{\klw}=16$, reported as a single-seed control).
\item Fluence warm-up: $500$ steps.
\item Fourier controller: $n_{\mathrm{modes}}=16$,
$\mathrm{init\_scale}=1.0$.
\item Seeds: $3$ for each of baseline, $\lambda_{\dvr}=16$, and
$\lambda_{\klw}=0.1$.
\end{itemize}

\section{PPO baseline configuration}
\label{app:ppo}

We use the constrained-RL formulation of \citet{ernst2025} as our
trajectory-based RL baseline.

\paragraph{Reference implementation.} We use the authors'
\texttt{train\_ppo.py} from the released codebase \citep{ernst2025}
together with their PPO implementation, wrapping our QMaxCal system
modules through Ernst's generic environment interface. The reward
function follows Eq.~2 of \citet{ernst2025}.

\paragraph{Architecture and core hyperparameters.} Separate
actor-critic MLPs with hidden width $256$ and ReLU6 activations;
Gaussian policy. PPO with clip $\epsilon=0.2$, GAE $\lambda=0.95$,
discount $\gamma_{\rm RL}=0.99$, max gradient norm $0.5$, $4$
minibatches per update, no entropy bonus. Adam optimizer with linear
learning-rate annealing. Training runs $50{,}000$ update steps with
$16$ parallel environments and $50$ piecewise-constant action samples
per episode ($4 \times 10^7$ environment interactions per run),
smoothed by a Gaussian kernel of length $25$. The constrained-RL
maximum ODE-solver step is $N_{\max}^{\rm sim}=1000$ with penalty
$25$ when exceeded. For the IBM Kingston chain-6 system only, we
chunk training into blocks of $5000$ updates (saving an
intermediate snapshot at each chunk boundary) and cap each run at
$3$\,h of GPU wall time; if the cap is hit, the most recent
snapshot is used. For reference, the QMaxCal training runs for the 6 qubit chain
completed in under $2$\,h of GPU wall time.

\paragraph{Hyperparameter sweep.} We swept the PPO hyperparameters in two sequential phases per benchmark.

\emph{Phase A:} $N_{\max}^{\rm sim} \times \lambda_{\rm flu}$ on a
$\{500, 1000, 2000, 4000\} \times \{0.001, 0.01, 0.1\}$ grid for
amplitude damping, where $N_{\max}^{\rm sim}$ is the headline knob
from Fig.~1 of \citet{ernst2025} and $\lambda_{\rm flu}$ is a
quadratic fluence penalty $\lambda_{\rm flu}\!\int_0^T |\Omega|^2 dt$
that we add to Ernst's reward (distinct from Ernst's linear
pulse-area penalty $w_A \!\int|\Omega|dt$ in Eq.~2; the quadratic
form matches QMaxCal's regulariser and places PPO on the same
fluence--fidelity Pareto frontier; we set Ernst's linear $w_A=0$).
For STIRAP and the diamond system we used the reduced Phase A grid
$\{1000, 4000\} \times \{0.001, 0.01, 0.1\}$ ($6$ cells), since the
amplitude-damping sweep showed $N_{\max}^{\rm sim}$ to be flat in
the relevant range. For the IBM Kingston chain-6 we omit Phase A
entirely: $\lambda_{\rm flu}$ is fixed at $0.02$ to match the
QMaxCal regulariser strength, and $N_{\max}^{\rm sim}=1000$ is
inherited from the converged Phase-A choice on the smaller systems.

\emph{Phase B:} at the best Phase-A cell, the constraint-weight
ablation from Fig.~14 of \citet{ernst2025}: smoothness penalty
$\in \{0, 0.001, 0.01\}$, Gaussian-kernel std $\in \{2, 4, 8\}$,
and fidelity weight $w_F \in \{1, 5\}$ ($18$ cells). For the IBM
Kingston chain-6 we further restrict Phase B to a $2\times2$ grid
over smoothness $\in \{0, 0.001\}$ and kernel std $\in \{2, 4\}$
with $w_F$ fixed at $1$, motivated by GPU-budget constraints (each
matched-fluence chain-6 PPO run takes $\sim\!2.5$\,h vs.\ $\sim\!1$\,h
on the smaller systems).

This yields $12 + 18 = 30$ configurations on amplitude damping,
$6 + 18 = 24$ on STIRAP and the diamond system, and $0 + 4 = 4$ on
the IBM Kingston chain-6.

\emph{Seeds and selection.} Sweep runs used a single seed; the
final reported numbers come from $4$ seeds on amplitude damping,
$3$ seeds on STIRAP, the diamond system, and the IBM Kingston
chain-6 (per Tables~\ref{tab:ampdamp_fidelity_full},
\ref{tab:stirap_multiseed}, \ref{tab:dr3_multiseed},
\ref{tab:ibm_chain_full}). Selection criterion: highest
exact-Lindblad fidelity $F$ on the deterministic-rollout snapshot.

\paragraph{Per-system overrides.} Phase A converged to
$(N_{\max}^{\rm sim}, \lambda_{\rm flu}) = (1000, 0.001)$ for
amplitude damping, STIRAP, and the diamond system; for IBM
Kingston, $\lambda_{\rm flu}$ was set to $0.02$ a priori. Phase B
winners differed per benchmark:

\begin{center}
\begin{tabular}{@{}lcccc@{}}
\toprule
benchmark & smoothness & kernel std & $w_F$ & $\lambda_{\rm flu}$ \\
\midrule
amp.\ damping     & $0$     & $2$ & $5$ & $0.001$ \\
STIRAP            & $0.001$ & $2$ & $5$ & $0.001$ \\
diamond           & $0.001$ & $2$ & $1$ & $0.001$ \\
IBM Kingston q14$\to$q9 & $0.001$ & $4$ & $1$ & $0.02$ \\
\bottomrule
\end{tabular}
\end{center}

\paragraph{Fluence-penalty sanity check.} To confirm that the
quadratic fluence penalty $\mathrm{Flu} = \sum_a \int_0^T |u_a^{(\theta)}(t)|^2\,
dt$
is not handicapping PPO, we re-trained the best Phase-B
configuration on all four benchmarks with $\lambda_{\rm flu}=0$
(single seed each). Removing the penalty did not improve PPO on amplitude damping and STIRAP: fidelity was
unchanged within seed noise. On the diamond system, pulse
fluence grew by $26\times$ ($\sum_a \int_0^T |u_a^{(\theta)}(t)|^2\,
dt: 119\to 3116$)
while fidelity dropped slightly ($F: 0.603\to 0.559$). On the IBM
Kingston 6 qubit chain, removing the penalty raised fidelity from $0.604$
to $0.737$ (Table~\ref{tab:ibm_chain_full}, single seed), while the fluence increase $66\times$ ($7.84 \to 519.89$). We
nonetheless report the matched-fluence number as the headline PPO
result to maintain a fair comparison with QMaxCal, which
uses $\lambda_{\rm flu}=0.02$.

\section{Smoothness regularization is not Wiener KL}
\label{app:smoothness_ablation}
We rule out the hypothesis that Wiener KL acts as an implicit smoothness penalty on the controls $u_a^{(\theta)}(t)$. The derivative penalties of \citet{abdelhafez2019} (C4 and C5) take the form
\[
  \lambda_{d_1}\!\int_0^T \!|u'(t)|^2\,dt \;+\;
  \lambda_{d_2}\!\int_0^T \!|u''(t)|^2\,dt
\]
we add them to QMaxCal at two weight settings on amplitude damping
($\gamma T=2$) and STIRAP ($\gamma T=10$), with smoothness terms
warmed up over the first $1500$ optimiser steps.

Table~\ref{tab:smooth_ablation} shows that adding C4/C5 strictly
costs fidelity in every cell and never recovers the Wiener-KL-only
result. On STIRAP the heavier weight collapses fidelity by
$16$\,pp; combining it with Wiener KL adds at most $1$\,pp over
smoothness alone, confirming that Wiener KL cannot rescue a
trajectory already constrained by pulse-derivative penalties. Reported numbers are single-seed; the effect size (16pp collapse on STIRAP) is large enough that seed variance is unlikely to overturn the conclusion.

The two regularizers act on physically distinct objects: Wiener KL
acts on the dissipation-channel drift
$\alpha_k(t)=2\,\mathrm{Re}\langle\psi|L_k|\psi\rangle$ in
trajectory space, driving it toward $\ker L$; C4/C5 act on
derivatives of the controls in pulse space, agnostic to the
trajectory's relationship to the noise kernel. They cannot
substitute for each other.

\begin{table}[t]
\centering
\caption{\textbf{Smoothness ablation.} Single seed; smoothness
warmed up over $1500$ steps. \emph{Top:} amplitude damping
($\gamma T=2$, $\lambda_{\klw}=5$). \emph{Bottom:} STIRAP
($\gamma T=10$, $\lambda_{\klw}=1$). Bold marks the highest
fidelity per block.}
\label{tab:smooth_ablation}
\begin{tabular}{@{}llc@{}}
\toprule
system & method & $F_{\rm exact}$ \\
\midrule
amp.\ damp. & baseline & $0.972$ \\
amp.\ damp. & Wiener KL & $\mathbf{0.981}$ \\
amp.\ damp. & smooth only ($10^{-3},10^{-5}$) & $0.932$ \\
amp.\ damp. & smooth $+$ Wiener KL ($10^{-3},10^{-5}$) & $0.941$ \\
\midrule
STIRAP & baseline & $0.978$ \\
STIRAP & Wiener KL & $\mathbf{0.979}$ \\
STIRAP & smooth only ($10^{-3},10^{-5}$) & $0.819$ \\
STIRAP & smooth $+$ Wiener KL ($10^{-3},10^{-5}$) & $0.822$ \\
STIRAP & smooth only ($10^{-4},10^{-6}$) & $0.914$ \\
STIRAP & smooth $+$ Wiener KL ($10^{-4},10^{-6}$) & $0.919$ \\
\bottomrule
\end{tabular}
\end{table}

\paragraph{Parameters.} Amplitude damping: $T=1$, $\gamma=2$,
$5000$ steps, $256$ SSE samples, $n_{\rm modes}=20$.
STIRAP: $T=2$, $\gamma=5$, $10\,000$ steps, $64$ SSE samples,
$n_{\rm modes}=16$.

\section{SSE solver schemes and scaling behaviour}
\label{app:scaling}

This appendix documents the two diffusive-SSE integrators used to
optimise QMaxCal protocols, characterises how each one scales with
system size $N$, and contrasts both with the Lindblad master-equation
solver (\texttt{qiskit-dynamics} on top of \texttt{diffrax} Dopri5)
used as ground truth. Our SSE-integrator implementations build on
\texttt{dynamiqs}~\citep{guilmin2025dynamiqs}, a JAX-based library for
differentiable open-quantum-system simulation; we extended its
diffusive-SSE solvers with the Girsanov KL and drift-variance
loss terms of Section~\ref{sec:methods}.

\subsection{The two SSE integrators}
\label{app:scaling:integrators}

We solve the diffusive Stochastic Schr\"odinger Equation
\begin{equation}
\begin{split}
  d|\psi\rangle = \Big[ &-iH\,dt - \tfrac{1}{2}\sum_k L_k^\dagger L_k\,dt
                 + \tfrac{1}{2}\sum_k\langle L_k+L_k^\dagger\rangle L_k\,dt - \tfrac{1}{8}\sum_k\langle L_k+L_k^\dagger\rangle^2\,dt\Big]
                 |\psi\rangle \\
                & + \sum_k\!\big(L_k - \tfrac{1}{2}\langle L_k+L_k^\dagger\rangle\big)|\psi\rangle\, dW_k
\end{split}
\end{equation}
with two integrators, summarised in Table~\ref{tab:em_vs_expsplit}.

\paragraph{Euler--Maruyama (EM).}
A first-order weak-order-1 stochastic Euler step that applies $H$ and
the $\{L_k\}$ to $|\psi\rangle$ as matrix--vector products. EM never
instantiates a dense propagator, so memory stays linear in $\dim$.
Local truncation error is $O(dt\cdot\|H_{\mathrm{eff}}\|^2)$, so
large-norm Hamiltonians require proportionally finer time
discretisation.

\paragraph{Exponential split-step (ExpSplit).}
Splits each timestep into a deterministic half, solved exactly via
$\exp(-iH_{\mathrm{eff}}\,dt)|\psi\rangle$ with $H_{\mathrm{eff}} = H -
i\tfrac{1}{2}\sum_k L_k^\dagger L_k$, and an Euler-style stochastic
correction. The matrix exponential makes the deterministic half
Hamiltonian-exact and removes the $\|H\|$-dependence of EM's
truncation error, but instantiates a dense $\dim\times\dim$ propagator
at every step, costing $O(\dim^3)$ compute and $O(\dim^2)$ memory.

\begin{table}[h]
\centering
\caption{Per-step cost and error scaling. $n_{\mathrm{trajs}}$ is the
number of stochastic samples; $\dim = 2^N$.}
\label{tab:em_vs_expsplit}
\begin{tabular}{@{}lccc@{}}
\toprule
solver & per-step compute & per-step memory & Hamiltonian error \\
\midrule
EM       & $O(\dim^2 \cdot n_{\mathrm{trajs}})$ & $O(\dim \cdot n_{\mathrm{trajs}})$ & $O(dt^2 \|H\|^2)$ \\
ExpSplit & $O(\dim^3)$                          & $O(\dim^2)$                         & exact \\
\bottomrule
\end{tabular}
\end{table}

The Lindblad reference solver propagates the full density matrix
$\rho(t)$ via adaptive Dopri5 on the Liouvillian, costing $O(\dim^3)$
compute and $O(\dim^2)$ memory per adaptive step.

\subsection{Scaling with $N$}
\label{app:scaling:oom}

We benchmarked all three solvers on random $N$-qubit XX$+$YY chains
(per-site $\sigma_-$ and $\sigma_z$ Lindblad channels, random
nearest-neighbour exchange, $T_{\mathrm{final}}=1$, Fourier control
pulses with $n_{\mathrm{modes}}=4$). All runs used a single 40\,GB
A100. Wall-clock and peak GPU memory are reported in
Table~\ref{tab:scaling_wallclock}.

\begin{table}[h]
\centering
\caption{Cached forward-pass wall-clock and peak GPU memory for the
three solvers on random $N$-qubit chains; same controller, same
Lindblad operators, $n_{\mathrm{trajs}}=256$,
$n_{\mathrm{time\text{-}steps}}=2048$, single 40\,GB A100.}
\label{tab:scaling_wallclock}
\begin{tabular}{@{}cccccc@{}}
\toprule
$N$ & $\dim$ & Lindblad call & Lindblad peak GPU & EM call & ExpSplit call \\
\midrule
4  & 16    & ---         & ---         & $0.1$\,s          & $1.2$\,s \\
6  & 64    & $3.9$\,s    & $82$\,MiB   & $0.2$\,s          & $1.4$\,s \\
8  & 256   & $4.3$\,s    & $130$\,MiB  & $0.5$\,s          & $2.7$\,s \\
10 & 1024  & $21.8$\,s   & $1.85$\,GiB & $3.6$\,s          & $15.1$\,s \\
12 & 4096  & \multicolumn{2}{c}{\textbf{GPU-OOM}} & $\mathbf{45.8}$\,s & JIT did not complete in 2\,h \\
\bottomrule
\end{tabular}
\end{table}

Two qualitative results stand out. First, the two $O(\dim^3)$ solvers
hit a wall at the same scale: by $N=12$, the dense $\dim\times\dim$
matrices needed for density-matrix propagation (Lindblad) and for the
matrix exponential (ExpSplit) exceed practical GPU memory and
JIT-compile budgets — the Lindblad solver runs out of GPU memory and
the ExpSplit solver fails to finish JIT-lowering within a $2$\,h
budget. EM's $O(\dim\cdot n_{\mathrm{trajs}})$ state representation
pushes the ceiling roughly two qubits further, with rapidly growing
wall-clock. Second, at the system sizes that are reachable for all
three (up through $N=10$), EM and ExpSplit run several times faster
than the Lindblad reference on identical hardware — consistent with
the quadratic state-space advantage of trajectory representations
discussed in Appendix~\ref{app:quantum_primer}.

\subsection{When EM is too coarse: Hamiltonian-amplitude sensitivity}
\label{app:scaling:em_amp}

EM's local error scales as $O(dt\cdot\|H_{\mathrm{eff}}\|^2)$, so EM
becomes inaccurate when $\|H_{\mathrm{eff}}\|$ is large at fixed
time-grid resolution. We illustrate with a controlled experiment at
$N=10$: fixed seed, $n_{\mathrm{trajs}}=1024$ (Monte Carlo noise floor
$\sim\!0.031$ in Frobenius distance to Lindblad), random-Fourier
pulse amplitude varied over $\{1, 3, 10\}$ (peak
$\|H\|_{\mathrm{op}}\!\approx\!17, 35, 105$), and
$n_{\mathrm{time\text{-}steps}}$ swept over a 256$\times$ range.

\begin{table}[h]
\centering
\caption{EM Frobenius distance to the Lindblad ground truth, $N=10$.
Rows: pulse amplitude. Columns: time-discretisation. Bold cells
indicate bias dominating the Monte Carlo floor of $\sim\!0.031$.}
\label{tab:em_accuracy_amp}
\begin{tabular}{@{}cccccccc@{}}
\toprule
pulse amp & $\|H\|_{\mathrm{op}}$ & $\mathrm{nts}{=}256$ & 1024 & 4096 & 16384 & 65536 \\
\midrule
1   & $17$  & $0.039$ & $0.028$ & $0.029$ & $0.024$ & $0.032$ \\
3   & $35$  & $\mathbf{0.121}$ & $0.038$ & $0.031$ & $0.028$ & $0.031$ \\
10  & $105$ & $\mathbf{0.629}$ & $\mathbf{0.204}$ & $\mathbf{0.052}$ & $0.033$ & $0.030$ \\
\bottomrule
\end{tabular}
\end{table}

Three observations: (i) at amplitude $1$ EM is converged at
$n_{\mathrm{time\text{-}steps}}=256$ and refining the grid 256-fold
does not change the result; (ii) at amplitude $3$ a $4\times$ refinement
reaches the noise-limited regime; (iii) at amplitude $10$ EM bias
dominates the Frobenius error until $n_{\mathrm{time\text{-}steps}}\geq
16384$, with the 256-step run producing a Frobenius error of $0.63$ —
twenty times the MC floor, making the SSE estimate effectively
meaningless. The scaling tracks the theoretical prediction
$\mathrm{bias}\sim dt\cdot\|H_{\mathrm{eff}}\|^2\cdot T$: a $6\times$
growth in $\|H\|_{\mathrm{op}}$ requires $\sim\!36\times$ more
time-steps for matched bias, and the empirical $64\times$ refinement
required to recover convergence is within a factor of two of the
prediction.

\paragraph{Solver choice for the experiments in this paper.}
We use ExpSplit for all training and evaluation runs reported in the
body of the paper, except for the diamond system, which uses
Euler--Maruyama. The amplitude-sensitivity experiment of
Section~\ref{app:scaling:em_amp} shows that EM accuracy at fixed
$n_{\mathrm{time\text{-}steps}}$ degrades quickly once peak
$\|H_{\mathrm{eff}}\|$ rises above $\sim\!30$, requiring roughly
$\|H\|^2$-scaled refinement of the time grid to recover convergence.
ExpSplit's deterministic half is Hamiltonian-exact and removes this
sensitivity entirely, so we do not have to revisit the time-grid
resolution when sweeping decoherence rates, control amplitudes, or
the number of qubits.

For the diamond system
EM is comfortably within its accurate regime: the four-level Hilbert
space and the small-amplitude controls ($\lambda_{\mathrm{flu}}=
0.001$ keeps the controller amplitude well below the threshold of
Table~\ref{tab:em_accuracy_amp}) place it firmly in the EM-converged
zone, and at $\dim=4$ the $O(\dim^2)$ trajectory representation is
strictly faster than ExpSplit's $O(\dim^3)$ matrix exponential. The
larger benchmarks (chain-4 at $\dim=16$, chain-6 at $\dim=64$, plus
the amplitude-damping and STIRAP sweeps where we wanted insurance
against the strong-noise corner of the parameter sweep) use ExpSplit.
For systems beyond the dimensions in this paper EM remains the
appropriate choice, with $n_{\mathrm{time\text{-}steps}}$ chosen to
match the expected operator norm.

\section{Compute budget}
\label{app:compute}
All experiments ran on a single NVIDIA A100 GPU (40\,GB,
\texttt{gpu\_a100}). The compute consumed by the experiments reported
in this paper breaks down approximately as $\sim$55\,h on
amplitude-damping, STIRAP, and diamond system training (including
their PPO baselines); $\sim$25\,h on the chain-4 $\rho$-sweep;
$\sim$30\,h on the IBM Kingston q14$\to$q9 chain (including the PPO
hyperparameter sweep and multi-seed extension); and $\sim$10\,h on
the scaling benchmarks of Appendix~\ref{app:scaling}, for a reported
total of approximately $\mathbf{120}$\,\textbf{A100-hours}. The full
project, including hyperparameter sweeps, alternative system
parameterisations, and exploratory variants that did not make it into
the final paper, consumed approximately $\mathbf{240}$\,\textbf{A100-hours}
of GPU compute.

% NeurIPS-specific checklist removed for ICML submission
% \input{draft/checklistv2}

\end{document}